\theoremstyle{definition}
\DeclareSymbolFont{largesymbolsA}{U}{txexa}{m}{n}
\DeclareMathSymbol{\varprod}{\mathop}{largesymbolsA}{16}
\newcommand\ubar[1]{\underaccent{\bar}{#1}}
\DeclareMathOperator*{\argmin}{arg\,min}
\DeclareMathOperator*{\argmax}{arg\,max} 
\newcommand{\bp}{\begin{proof} \small }
\newcommand{\ep}{\end{proof} \normalsize}
\newcommand{\epx}{\end{proof} \small}
\newcommand{\bpa}{\begin{proofappx} \footnotesize }
\newcommand{\epa}{\end{proofappx} \small }
\newtheorem{theorem}{Theorem}
\newtheorem{lemma}{Lemma}
\newtheorem{assumption}{Assumption}
\newtheorem{definition}{Definition}
\newtheorem*{theorem*}{Theorem}
\newtheorem*{proposition*}{Proposition}
\newtheorem*{corollary*}{Corollary}
\newtheorem*{lemma*}{Lemma}
\newtheorem*{assumption*}{Assumption}
\newtheorem*{definition*}{Definition}
\newtheorem*{claim*}{Claim}
\newcommand{\bm}[1]{\mbox{\boldmath $#1$}}
\newcommand{\be}{\begin{equation}}
\newcommand{\ee}{\end{equation}}
\newcommand{\bs}{\begin{subequations}}
\newcommand{\es}{\end{subequations}}
\newcommand{\bq}{\begin{eqnarray}}
\newcommand{\eq}{\end{eqnarray}}
\newcommand{\bqn}{\begin{eqnarray*}}
\newcommand{\eqn}{\end{eqnarray*}}
\newcommand{\ba}{\left[ \begin{array}}
\newcommand{\ea}{\\ \end{array} \right]}
\newcommand{\ben}{\begin{enumerate}}
\newcommand{\een}{\end{enumerate}}
\def\f{{\boldsymbol{f}}}
\def\p{{\boldsymbol{p}}}
\def\x{{\boldsymbol{x}}}
\def\real{{\mathchoice%
{\hbox{\rm\setbox1=\hbox{I}\copy1\kern-.45\wd1 R}}
{\hbox{\rm\setbox1=\hbox{I}\copy1\kern-.45\wd1 R}}
{\hbox{\scriptsize\rm\setbox1=\hbox{I}\copy1\kern-.45\wd1 R}}
{\hbox{\scriptsize\rm\setbox1=\hbox{I}\copy1\kern-.45\wd1 R}}}}
\def\Zint{{\mathchoice{\setbox1=\hbox{\sf Z}\copy1\kern-.75\wd1\box1}
{\setbox1=\hbox{\sf Z}\copy1\kern-.75\wd1\box1}
{\setbox1=\hbox{\scriptsize\sf Z}\copy1\kern-.75\wd1\box1}
{\setbox1=\hbox{\scriptsize\sf Z}\copy1\kern-.75\wd1\box1}}}
\newcommand{\complex}{ \hbox{\rm C\kern-0.45em\rule[.07em]{.02em}{.58em}%
\kern 0.43em}}
\newcommand{\algmargin}{\the\ALG@thistlm}
\newlength{\whilewidth}
\algnewcommand{\parState}[1]{\State%
	\parbox[t]{\dimexpr\linewidth-\algmargin}{\strut #1\strut}}
\begin{document}
%
\title{Budget-constrained Edge Service Provisioning with Demand Estimation via Bandit Learning}
%
%
%
\author{Lixing Chen ~\IEEEmembership{Student Member,~IEEE}
        ~~Jie Xu~\IEEEmembership{Member,~IEEE}
\thanks{L. Chen and J. Xu are with the Department of Electrical and
	Computer Engineering, University of Miami. Email: \{lx.chen, jiexu\}@miami.edu}}

\maketitle
\vspace{-0.4 in}
\begin{abstract}
Shared edge computing platforms, which enable Application Service Providers (ASPs) to deploy applications in close proximity to mobile users are providing ultra-low latency and location-awareness to a rich portfolio of services. Though ubiquitous edge service provisioning,  i.e., deploying the application at all possible edge sites, is always preferable, it is impractical due to often limited operational budget of ASPs. In this case, an ASP has to cautiously decide where to deploy the edge service and how much budget it is willing to use. A central issue here is that the service demand received by each edge site, which is the key factor of deploying benefit, is unknown to ASPs a priori. What's more complicated is that this demand pattern varies temporally and spatially across geographically distributed edge sites. In this paper, we investigate an edge resource rental problem where the ASP learns service demand patterns for individual edge sites while renting computation resource at these sites to host its applications for edge service provisioning. An online algorithm, called Context-aware Online Edge Resource Rental (COERR), is proposed based on the framework of Contextual Combinatorial Multi-armed Bandit (CC-MAB). COERR observes side-information (context) to learn the demand patterns of edge sites and decides rental decisions (including where to rent the computation resource and how much to rent) to maximize ASP's utility given a limited budget. COERR provides a provable performance achieving sublinear regret compared to an Oracle algorithm that knows exactly the expected service demand of edge sites. Experiments are carried out on a real-world dataset and the results show that COERR significantly outperforms other benchmarks.

\end{abstract}


%
\IEEEpeerreviewmaketitle

\section{Introduction}\label{sec:introduction}
The prevalence of ubiquitously connected smart devices and the Internet of Things are driving the development of intelligent applications, turning data and information into actions that create new capabilities, richer experiences, and unprecedented opportunities. As these applications become increasingly powerful, they are also turning to be more computational-demanding, making it difficult for resource-constrained mobile devices to fully realize their functionalities solely. Although mobile cloud computing \cite{dinh2013survey} has provided mobile users with a convenient access to a centralized pool of configurable and powerful computing resources, it is not a ``one-size-fit-all'' solution due to the stringent latency requirement of emerging applications and often unpredictable network condition. In addition, as the mobile applications (e.g. mobile gaming and virtual/augmented reality) are becoming more data-hungry, it would be laborious to transmit all these data over today's already congested backbone network to the remote cloud. 

As a remedy, Mobile Edge Computing (MEC) \cite{satyanarayanan2017emergence} has been recently proposed as a new computing paradigm to enable service provisioning in close proximity of user devices at the network edge, thereby enabling analytics and knowledge generation to occur closer to the data source and providing low-latency responses. Such an edge service provisioning scenario is no longer a mere version but becoming a reality. For example, Vapor IO's Kinetic Edge \cite{vaporio} places edge data centers at the base of cell towers and nearby aggregation hubs, thereby bringing cloud-like services to the edge of the wireless network. Kinetic Edge has started in Chicago and is rapidly expanding to the other US cities. It is anticipated that cloud providers, web scale companies, and other enterprises will soon be able to rent computation resources at these shared edge computing platforms to deliver edge applications in a flexible and economical way without building their own data center or trenching their own fiber.

However, how to effectively and efficiently deliver edge service in such a shared edge system faces many special issues. Firstly, the benefit of deploying application service at a certain edge server mainly depends on the number of edge task requests received from the users, yet this service demand is unknown to the Application Service Provider (ASP) before deploying applications at edge servers. What's more complicated is that the service demand is uncertain in both temporal and spatial domains, i.e., the demand pattern of an edge site varies across the time and the demand patterns at geographically distributed edge sites may not replicate a global demand pattern. How to learn the service demand pattern for each edge site precisely with \emph{cold-start} (i.e., no prior knowledge available) is the very first step toward efficient edge service provisioning. Secondly, to deploy services at the edge sites, ASP needs to rent a certain amount of computation resource to host its applications. While renting a sufficient amount of computation resource at every possible edge site can deliver the best Quality of Service (QoS), it is practically infeasible especially for small and starting ASPs due to the prohibitive budget requirement. In common business practice, an ASP has a budget on the operating expenses and desires the best performance within the budget \cite{stan2018uptime}. This means that the ASP can only rent limited computation resource at a limited number of edge sites and hence, which edge sites to deploy applications and how much computation resource to rent at these sites must be judiciously decided to optimize QoS given the limited budget. Thirdly, the service demand estimation and edge resource rental are not two independent problems but closely intertwined during online decision making. On the one hand, renting computing resource and deploying application service at an edge site allows the ASP to collect historical data on the received service demand for better demand estimation. On the other hand, accurate demand estimations help the ASP optimize its computation resource rental and achieve a higher QoS. Therefore, an appropriate balance should be made between these two purposes to maximize the utility of ASP in the long run. The main contributions of this paper are summarized as follows: \\
1) We formulate an edge resource rental (ERR) problem where ASP rents computation resource at edge servers to host its applications for edge service provisioning. ERR is a three-fold problem in which the ASP needs to (i) estimate the service demand received by edge sites with cold-start, (ii) decide whether edge service should be provided at a certain edge site, and (iii) optimize how much resource to rent at edge sites to maximize ASP's utility under a limited budget.   \\
2) An online decision-making algorithm called \emph{Context-aware Online Edge Resource Rental} (COERR) is proposed to solve the ERR problem. COERR is designed in the framework of Contextual Combinatorial Multi-armed Bandit (CC-MAB). The ``contextual'' nature of COERR allows the ASP to observe the side-information (context) of edge sites for the service demand estimation and the ``combinatorial'' nature of COERR enables the ASP to rent computation resource at multiple edge servers for utility maximization. \\
3) We analytically bound the performance loss, termed \emph{regret}, of COERR compared to an Oracle benchmark that knows the expected service demand of each edge site. The regret bound is first given in a general form available for \textit{arbitrary} estimators and algorithm parameters. A specific sublinear regret upper bound is then derived in a concrete setting by specifying the applied service demand estimators and algorithm parameters, which not only implies that COERR produces asymptotically optimal rental decisions but also provides finite-time performance guarantee. \\
4) We carry out extensive simulations using the real-world service demand trace in Grid Workloads Archive (GWA) \cite{iosup2008grid}. The results show that the proposed COERR algorithm significantly outperforms other benchmark algorithms.        

The rest of this paper is organized as follows: Section \ref{sec:related_work} reviews related works. Section \ref{sec:system_model} presents the system model for edge resource rental problem. Section \ref{sec:coerr} designs the context-aware online edge resource rental (COERR) algorithm and provides analytical performance guarantee. Section \ref{sec:ext} discusses the extension of COERR when applied with approximated solutions for per-slot utility maximization. Section \ref{sec:experiment} shows the experiment results of the proposed algorithm on a real-world service demand trace, followed by the conclusion in Section \ref{sec:conclusion}. 

\section{Related Work}\label{sec:related_work}
Driven by the promising properties and tempting business opportunities, Mobile Edge Computing (MEC) \cite{satyanarayanan2017emergence,mao2017mobile} is attracting more and more attention from both academia and industry. Various works have studied from different aspects of MEC, including edge platform design \cite{saguna} for integrating edge computing platform into edge facilities (e.g., Radio Access Network \cite{saguna}), computation offloading \cite{chen2017socially} for deciding what/when/how to offload tasks from user's mobile devices to edge servers, and edge orchestration \cite{liu2018orchestrator} for coordinating the distributed edge servers.  However, these edge computing topics all rest on the assumption that the computing resources and capabilities have been provisioned to ASP at the edge sites. By contrast, this paper focuses on the problem that how should ASP rent computation resource and place edge applications among many possible edge sites such that the users can better enjoy the edge service access.

Service placement in shared edge systems has been studied in many contexts in the past. Considering content delivery as a service, many prior works study caching content replicas in traditional content delivery networks (CDNs) \cite{chen2002dynamic} and, more recently, in wireless caching systems such as small cell networks \cite{shanmugam2013femtocaching}. However, content caching concerns only data content caching given storage constraints at edge facilities while placing edge applications needs to take into account the computation resources at the edge servers. Service placement for MEC is recently studied in \cite{zhao2018red}, where the authors consider a hierarchical edge-cloud system and an online replacement policy is designed to minimize the cost of forwarding requests to Cloud and downloading new services to edge server. While \cite{zhao2018red} uses a placing-upon-request model, our work is in a proactive manner where the deploy applications at the beginning of each decision cycle based on service demand estimation. The authors in \cite{xu2018joint,xie2018dynamic} investigate service placement/caching to improve the efficiency of edge resource utilization by enabling cooperation among edge servers. However, these works assume that the service demand is known a priori whereas the service demand pattern in our problem has to be learned over time. A learning-based edge service placement is proposed in \cite{chen2018spatio}, which uses bandit learning similar to the framework in this paper. However, it only addresses the problem of where the service should be placed but does not optimize how much computation resource needs to rent at edge servers. However, in practice, an ASP has to decide the amount of computing resource to rent when placing the service application. This paper helps the ASP to determine the amount of computing resource to rent at edge servers when placing the edge service. In addition, \cite{chen2018spatio} uses a simple sample mean to estimate the service demand, but we generalize our algorithm to work with an arbitrary estimator.

MAB algorithms have been studied to address the tradeoff between exploration and exploitation in sequential decision making under uncertainty \cite{lai1985asymptotically}. The classic MAB algorithm, e.g. UCB1, concerns with learning the single optimal action among a set of candidate actions with unknown rewards by sequentially trying one action each time and observing its realized noisy reward \cite{auer2002finite}. Combinatorial bandits extends the basic MAB by allowing multiple-play each time (i.e. renting computation resources at multiple edge servers under a budget in our problem) \cite{gai2012combinatorial} and contextual bandits extends the basic MAB by considering the context-dependent reward functions \cite{li2010contextual,tekin2015distributed}. While both combinatorial bandits and contextual bandits problems are already much more difficult than the basic MAB problem, this paper tackles the even more difficult CC-MAB problem. Recently, a few other works \cite{li2016contextual,muller2017context,chen2018contextual} also started to study CC-MAB problems. However, these works make assumptions that are not suitable for our problem. \cite{li2016contextual} assume that the reward of an individual action is a linear function of the contexts, which is less likely to be true in practice. In \cite{muller2017context}, the exact solution to its per-slot problems can be easily derived, however, in our problem, the per-slot problem is a Knapsack problem with conflict graph (KCG) whose optimal solution cannot be efficiently derived and hence, we also investigate the impact of approximation solution on CC-MAB framework. Though \cite{chen2018contextual} also considers the approximation solutions, it is given for a special case (greedy algorithm for submodular function maximization). The key difference of our CC-MAB is that it does not simply decide which arms to pull (i.e., which edge sites to place applications), it also chooses the configuration of arms (i.e., how much resource to rent at each edge site).

\section{System Model} \label{sec:system_model}
\subsection{Network Structure and Resource Rental}
We consider a typical scenario where edge computing is enabled in a  heterogeneous small-cell network as illustrated in Fig.\ref{fig:illu_system_model}. The small cell network consists of a set of Small Cells (SCs), indexed by $\mathcal{N} = \{1,2,\dots,N\}$, and a macro base station (MBS), indexed by $0$. Each SC has a small-cell base station (SBS) equipped with a shared edge computing platform. The edge platforms use virtualization technology for flexible allocation of computation resource, e.g., CPU frequency and RAM. ASPs sign contracts with SBSs to rent computation resource at co-located edge servers in order to host their application and provide service access to subscribed users. SBSs provide Software-as-a-Service (SaaS) to ASPs, managing computation resources requested by ASPs using virtualization, while the ASPs maintain its own user data serving as a middleman between end users and SaaS Providers. As such, SBSs charge ASPs for the amount of requested computation resource. Besides the SBSs and edge servers, there also exists an MBS that provides ubiquitous radio coverage and is connected to the cloud server in case that edge service is not available for users.
\begin{figure*}[tb]
	\centering
	\includegraphics[width=0.6\linewidth]{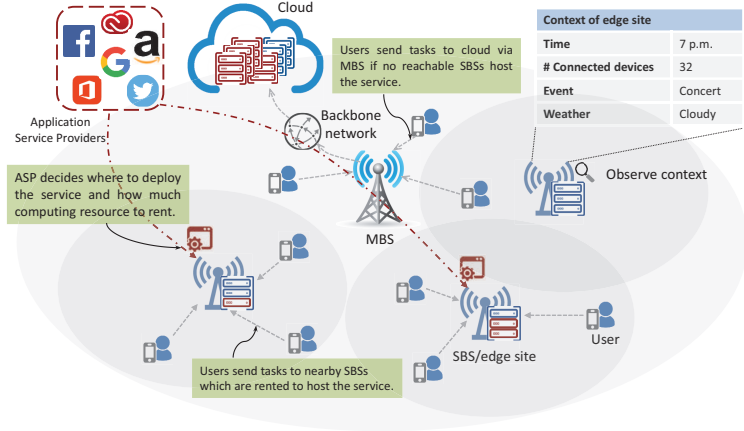}
	\vspace{-0.15 in}
	\caption{Heterogeneous small-cell network for edge computing. The red machines in edge servers or Cloud denote the computation resource rented by the ASP.}
	\label{fig:illu_system_model}
	\vspace{-0.2 in}
\end{figure*}
The contract of edge resource rental is signed for a fix length of time span (e.g., 3 hours or half a day). Therefore, we discretize the operational timeline into time slots. At the beginning of each time slot, ASP determines the amount of computation resource to rent from SBSs for application service deployment. In particular, we consider the processor capacity (i.e. CPU frequency) as the key component of computing resource since it decides the processing delay of tasks at edge servers as considered in most existing works \cite{chen2018computation, mao2016dynamic}. The other resource components, e.g., RAM, storage, I/O, are matched to the rented processor capacity. Let $f^t_n\in \left\{0\cup [f^\text{min}_n, f^\text{max}_n]\right\}$ denote the processor capacity rented by the ASP at SBS $n$ in time slot $t$, where $f^\text{min}_n$ is the minimal rental contract (i.e. the ASP have to at least rent $f^\text{min}_n$ to set up a virtualized computing platform at SBS $n$) and $f^\text{max}_n$ is the maximum computation resource that can be rented by the ASP at SBS $n$. Based on the state-of-the-art virtualization technologies, the resource allocation at edge server is often realized using \emph{containerization} or \emph{virtual machine} and hence we assume that each SBS $n$ discretized their computation resource into containers or VMs. In this case, the feasible rental decisions at SBS $n$ can be collected in a rental decision set $F_n$. Let $\f^t = \{f^t_1,f^t_2,\dots,f^t_N\}, f^t_n\in F_n$ be the computation resource rented by ASP at all SBSs. The vector $\f^t \in \mathcal{F} \triangleq \varprod_{n\in\mathcal{N}}F_n$ is referred as ASP's rental decision in $t$.

Each SBS sets a price for its computation resource. Let $w_n(f_n)$ denote the price charged by SBS $n$ if ASP rents $f_n$ processor capacity at SBS $n$, where $w_n(\cdot)$ is a non-decreasing mapping function\footnote{The price mapping could be non-decreasing linear/nonlinear functions or tables and each SBS may have its own mapping.} that determines the price for computation resource $f_n$. Due to the limited budget, the resource rental decision of ASP must satisfy the budget constraint $\sum_{n=1}^N w_n(f^t_n) \leq B, \forall t$, where $B$ is ASP's budget. Note that $f^\text{min}_n$, $f^\text{max}_n$, $F_n$, and $w_n(\cdot)$ may possibly vary across the time slots due to certain auction strategies or stochastic resource scheduling policies carried out by SBSs. To keep the system model simple, we assume that these parameters are constants. However, the proposed method is also compatible with time-varying system parameters. In addition, we consider edge resource rental problem for one ASP in this paper, the edge system may need strategies, e.g., \emph{first-come-first-served} or matching algorithms \cite{sonmez2011matching} to coordinate multiple ASPs.   

Besides the edge servers, ASP also possesses an entrepreneur cloud or a configured platform at the commercial cloud to provide ubiquitous application service. The processor capacity of the cloud service is denoted by $f^t_0$. Usually, we will have $f^t_0\gg f^t_n$. 

\subsection{Service Delay for Edge Computing}
During a time slot, users in the edge system have computation tasks to be offloaded to edge/cloud servers for processing. We assume that the input data size of one task is $s$ in bits and the number of required CPU cycles to process one task is $c$. If a user device is covered by an SBS, it can offload computational tasks to the edge server co-located with the SBS. The service delays are incurred for completing these tasks using edge computing; it consists of two main parts: transmission delay and processing delay. 
\subsubsection{Transmission delay} User's tasks are sent via the one-hop wireless connection to SBSs. Note that the time scale of edge resource rental cycles (e.g., half a day) is much larger than that of task offloading cycle (few seconds), an SBS may receive a large number of tasks, indexed by $k=1,2,\dots,K$, in time slot $t$. For each task $k$, the uplink transmission rate can be calculated by Shannon Capacity: 
\begin{align}
	r_k = W_k \log\left(1+\frac{P_k H_k}{I^\text{inter-cell}_k + I^\text{intra-cell}_k +\sigma^2}\right),
\end{align}
where $W_k$ is the allocated bandwidth, $P_k$ is the transmission power of the user, $H_k$ is the channel gain, $I^\text{inter-cell}_k$ and $I^\text{intra-cell}_k$ are the inter/intra-cell interferences, and $\sigma^2$ is the noise power. It is difficult to know exactly the data rate for transmitting each task during the planning stage due to unpredictable interference, fading, etc. Instead of considering the transmission rate for each task, we operate SBSs to work on an expected transmission rate $r^t$ in each time slot $t$, i.e., we expect $\frac{1}{K}\sum_{K}^{k=1} r_k$ equals $r^t$. Such an requirement on the expected transmission rate can be satisfied by state-of-the-art spectrum allocation method \cite{chandrasekhar2009spectrum}. We denote by $r^t_n$ the expected uplink transmission rate of SBS $n$ in time slot $t$. Then, the expected transmission delay for a user to transmit one task to SBS $n$ is $d^{\text{tx},t}_n = s/r^t_n$. To keep the system model simple, we assume the data size of task result is small and therefore its downloading time can be neglected. However, adding result downloading time does not make a big difference and our algorithm can still be applied.

\subsubsection{Processing delay} The processing delay of edge computing is determined by the processor capacities rented by the ASP at SBSs. We assume that the edge server admits at most $\lambda^{\max}$ tasks in a time slot to avoid overloading and queuing delays at edge servers. Given the processor capacity $f^t_n > 0$, the processing delay for one task at SBS $n$ can be obtained as: $d^{\text{proc},t}_n (f^t_n)= \frac{c}{f^t_n}$. Therefore, the service delay for one task at SBS $n$ is: 
\begin{align}
d^t_n(f^t_n) = d^{\text{tx},t}_n  + d^{\text{proc},t}_n (f^t_n)= \frac{s}{r^t_n}+ \frac{c}{f^t_n}, ~n= 1,2,\dots,N.
\end{align}

\subsection{Service Delay for Cloud Computing}
If a user has no accessible SBSs or its task request is rejected by an edge server due to overloading, it then has to offload its tasks to Cloud via an MBS. Similarly, the service delay for Cloud computing also consists of transmission delay and processing delay.
\subsubsection{Transmission delay}
Besides the wireless transmission delay incurred by sending the tasks from users to MBS, the offloaded tasks have to travel through congested backbone Internet, which incurs large backbone transmission delay, to reach the remote cloud server. We assume, similar to SBSs, that the MBS applies state-of-the-art channel/power/interference management strategies to guarantee an expected wireless transmission rate $r^t_0$ in time slot $t$. Therefore, the expected wireless transmission delay for one task is $s/r^t_0$. The backbone transmission delay is mainly determined by the backbone transmission rate, which is a random variable based on the network condition. Let $v^t$ be the expected backbone transmission delay and $h^t$ be the round-trip time in time slot $t$, then the expected backbone transmission delay for one task can be obtained as $s/v^t + h^t$. Taking into account all the above components, the expected transmission delay for one task using cloud computing can be obtained by: $d^{\text{tx},t}_0 = \frac{s}{r^t_0} + \frac{s}{v^t} + h^t$.

\subsubsection{Processing delay}
Since the cloud server has unlimited computation resources, we assume that the cloud server has no admission constraints. Recall that the processor capacity allocated for each task at ASP cloud is $f^t_0$, the processing delay for one task using cloud computing can be expressed easily as $d^{\text{proc},t}_0 = c/f^t_0$. 

The expected service delay for one task using cloud computing is therefore: 
\begin{align}
d^t_0 = d^{\text{tx},t}_0  + d^{\text{proc},t}_0 = \frac{s}{r^t_0} + \frac{s}{v^t} +\frac{c}{f^t_0} + h^t.
\end{align}

We assume that the maximum service delay for one task is bounded, i.e., $d^t_0, d^t_n \leq d^{\max}$. This is a practical assumption in edge computing since if the service delay of edge/cloud computing is too large the mobile devices can always choose to process the tasks locally, which guarantees a service delay $d^{\max}$.

\subsection{ASP Utility Function}
The applications deployed at the network edge improve QoS for users by providing low-latency response. The ASP derives utilities from the improved QoS, which is defined as delay reduction achieved by deploying services at edge servers. Let 
\begin{equation}
\Delta^t_n(f^t_n) = \left\{
\begin{split}
&d^t_0-d^t_n(f^t_n), & f_n>0\\
&0,& f_n =0
\end{split}\right.
\end{equation}
be the delay reduction of a task processed by SBS $n$ instead of Cloud and let $\lambda^t_n$ be the service demand within the coverage of SBS $n$. Note that $\lambda^t_n$ does not equal the service demand received by SBS $n$ since task requests will be offloaded to the cloud server if the ASP rents no computation resource at SBS $n$. Therefore, the total utility achieved by SBS $n$ is:
\begin{align}
	u^t_n(f^t_n;\lambda^t_n) = \min\{\lambda^t_n,\lambda^{\max}(f^t_n)\} \cdot \Delta^t_n(f^t_n).
\end{align}
where $\lambda^{\max}(f^t_n)$ is the maximum service demand can be processed by an SBS depending on the amount of rented computing resource $f^t_n$. Intuitively, more tasks can be process at an SBS when more computing resource are rented. Therefore, the function $\lambda^{\max}(\cdot)$ should be non-decreasing on $f^t_n$. Notice that the service delay for a task is bounded by $d^t_0, d^t_n \leq d^{\max}$, we have $\Delta^t_n(f^t_n)\leq d^{\max}$. The total ASP utility is 
\begin{align}
U^t(\f^t; \bm{\lambda}^t) = \sum\nolimits_{n\in\mathcal{N}} u^t_n(f^t_n;\lambda^t_n)
\end{align}
where $\bm{\lambda}^t = \{\lambda_1,\lambda_2,\dots,\lambda_N\}$ collects the service demands within the coverage of all $N$ SBSs.
\subsection{Problem Formulation}
The edge resource rental (ERR) problem for ASP is a sequential decision-making problem. The goal of ASP is to make rental decision $\f^t, \forall t$ to maximize the expected utility up to time horizon $T$. Since the service demand $\bm{\lambda}^t, \forall t$ of SBSs is not known to the ASP when making its rental decision, we write it as $\hat{\bm{\lambda}}^t$ that needs to be estimated at the beginning of each time slot. Therefore, the edge resource rental problem can be written as:

\begin{subequations}
	\begin{align}
	\textbf{ERR}:~~	\max_{\{\f^t\}_{t=1}^T} &~\sum\nolimits_{t=1}^T U^t(\f^t; \hat{\bm{\lambda}}^t) \\
		\text{s.t.} ~~~
		& f^t_n \in \left\{0 \cup [f_n^{\min}, f_n^{\max}]\right\}, \forall n\in \mathcal{N}, \forall t\\
		& f^t_n \in F_n, \forall n\in \mathcal{N}, \forall t\\
		& \sum\nolimits_{n=1}^N w_n(f^t_n) \leq B , \forall t
	\end{align}
\end{subequations}

There are several challenges to be addressed and should be addressed simultaneously to solve the ERR problem: (i) One of the key challenges of ERR is to make precise service demand estimation, such that the derived rental decision is able to produce the expected utility when implemented. Since the algorithm is run with cold start, the algorithm should also collect the historical data for making estimations. Note that the service demand received by an SBS is revealed to ASP only when the application is deployed ($f_n>0$) at the SBS. Though the service demand received by the cloud server can also be observed, it does not help much to learn the service demand of a specific SBS due to the fact that the location information of users is usually veiled to ASP due to the privacy concerns. Therefore, the rental decision making should take into account the data collection for demand estimation. (ii) With the service demand estimations, how to optimally determine the rental decision at each SBSs given the limited budget should be carefully considered. (iii) Since the rental decisions are made based on the estimated service demand, the accuracy of demand estimation will have a deterministic impact on ASP's utility. The ASP needs to decide when the estimation is accurate enough for guiding the computation resource rental and when more data should be collected to produce a better demand estimation. In the next section, we propose an algorithm based on the multi-armed bandit framework to address the mentioned issues at the same time.

\section{Edge Resource Rental as Contextual Combinatorial Multi-Armed Bandits}\label{sec:coerr}
In this section, we formulate our ERR problem as a Contextual Combinatorial Multi-Armed Bandit (CC-MAB). The problem is ``combinatorial'' because ASP will rent computation resource at multiple SBSs under a budget constraint. The problem is ``contextual'' because we will utilize context associated with SBSs to infer their service demand. In general, the contextual bandit is more applicable than non-contextual variants as it is rare that no context is available \cite{langford2008epoch}. In our problem, the service demand received by an SBS depends on many factors, which are collectively referred to as \emph{context}. For example, the relevant factors can be the user factor (e.g. user population, user type), temporal factor (e.g., time in a day, season), and external environment factors (e.g., events such as concerts). This categorization is clearly not exhaustive and the impact of each single context dimension on the service demand is unknown. Our algorithm learns to discover the underlying connection between context and service demand pattern over time. 

In CC-MAB, ASP observes the context of SBSs at the beginning of each time slot before making the rental decision. Let $x^t_n \in \mathcal{X}$ be the context of SBS $n$ observed in time slot $t$, where $\mathcal{X}$ is the context space. Without loss of generality, we assume that the context space is bounded and hence can be denoted as $\mathcal{X} = [0,1]^D$, $D$ is the number of context dimension. The context of all SBSs are collected in $\x^t = \{x^t_1,x^t_2,\dots,x^t_N\}$. The service demand $\lambda^t_n$ received by SBS $n$ is a random variable parameterized by the context $x^t_n$. Let $\lambda_n: \mathcal{X} \to \Lambda_n$ be the mapping that maps a context $x^t_n \in \mathcal{X}$ to SBS $n$'s service demand distribution $\lambda_n(x^t_n)$. We rewrite the service demand vector in a context-aware form: $\bm{\lambda}^t = \{\lambda_1(x^t_1), \lambda_2(x^t_2),\dots, \lambda_N(x^t_N) \}$. In addition, we let $\mu_n(x^t_n)\triangleq\mathbb{E}[\lambda_n(x^t_n)]$ be the expected value of the service demand distribution $\lambda_n(x^t_n)$. The vector $\bm{\mu}^t = \{\mu_1(x^t_1), \mu_2(x^t_2),\dots,\mu_N(x^t_N)\}$ collects the expected service demands for all SBSs given context $\x^t$.

\subsection{Oracle Solution and Regret}
Before proceeding with the algorithm design, we first give an Oracle benchmark solution to the ERR problem by assuming that the ASP knows exactly the context-aware service demand $\mu_n(x), \forall x\in\mathcal{X}$. In such a case, the ERR problem can be decoupled into $T$ independent subproblems, one for each time slot $t$, as below:
\begin{subequations}\label{opt:subproblem}
	\begin{align}
	\textbf{Sub-problem}:~~&\max_{\f^t} ~ U^t(\f^t; \bm{\mu}^t) \\
	\text{s.t.} ~~~
	& f^t_n \in \{0 \cup [f_n^{\min}, f_n^{\max}]\}, \forall n\in\mathcal{N} \label{cstr:sub_c1}\\
	& f^t_n \in F_n, \forall n\in\mathcal{N} \label{cstr:sub_c2}\\
	& \sum\nolimits_{n\in\mathcal{N}} w_n(f^t_n) \leq B \label{cstr:sub_c3}
	\end{align}
\end{subequations}
where the service demand estimation $\hat{\bm{\lambda}}^t$ is replaced by $\bm{\mu}^t$. The above subproblem is an combinatorial optimization problem with Knapsack constraints. The optimal solution to each subproblem can be derived by \emph{brute-force} if the size of action space $\mathcal{F}$ is moderate. For larger problems, the ASP may use commercial optimizers, e.g., LINDO \cite{lin2009global}, CPLEX \cite{cplex2009v12}, to obtain optimal solutions. For the coherence, we here skip the details for solving the subproblems and denote the optimal Oracle solution for each subproblem in time slot $t$ as $\f^{*,t}$. The collection $\{\f^{*,t}\}_{t=1}^{T}$ is the Oracle solution to ERR problem. Later in Section \ref{sec:ext}, both exact and approximate solutions for optimization problem in \eqref{opt:subproblem} will be discussed using the framework of \emph{Knapsack problem with Conflict Graphs} (KCG). In addition, the impact of error due to approximation on the performance of the proposed algorithm will be analyzed.

However, in practice, the ASP does not have a priori knowledge on the users' service demand, and therefore the ASP has to make rental decisions $\f^t$ based on the service demand estimation $\hat{\bm{\lambda}}^t$ in each time slot. An online decision-making policy designs certain strategies to choose a rental decision $\f^t$ based on the estimation $\hat{\bm{\lambda}}^t$. The performance of designed policy is measured by utility loss, termed \emph{regret}, compared to the utility achieved by Oracle solution. The expected regret of a policy is defined by:
\begin{align}
\mathbb{E}\left[R(T)\right] = \sum\nolimits_{t=1}^T \left( \mathbb{E}\left[U^t(\f^{*,t};\bm{\lambda}^t)\right]- \mathbb{E}\left[U^t(\f^{t};\bm{\lambda}^t)\right] \right)
\end{align}
Here, the expectation is taken with respect to the decisions made by the decisions made by the decision-making policy and the service demand distribution over context.

\subsection{Context-aware Online Edge Resource Rental Algorithm}
Now, we are ready to present our online decision-making algorithm called \emph{Context-aware Online Edge Resource Rental} (COERR). The COERR algorithm is designed in the framework of CC-MAB. In each time slot $t$, ASP operates sequentially as follows: (i) ASP observes the contexts of all $N$ SBSs $\x^t = \{x^t_n\}_{n\in\mathcal{N}}, x^t_n \in \mathcal{X}$. (ii) ASP determines its rental decision $\f^t$ based on the observed context information $\x^t$ in the current time slot and the knowledge (i.e., the connection between SBS context and service demand) learned from the previous time slots. (iii) The rental decision $\f^t$ is applied. If $f^t_n \neq 0$, the users within the coverage of SBS $n$ can offload computation tasks to SBS $n$ for edge processing. (iv) At the end of the time slot, the number of tasks received by rented SBS $n$ (i.e. $f_n > 0$) is observed $\lambda_n^t$, which is then used to update the service demand estimation $\hat{\lambda}_n(x^t_n)$ for the observed context $x^t_n$ of SBS $n$. The users who cannot access the edge service will offload tasks to the cloud server.    

The context of SBSs is from a continuous space and hence there can be infinitely many contexts for an SBS. It would be extremely laborious, if not impossible, to collect historical demand records and learn a service demand distribution for each possible context. To make the context-aware demand estimation tractable, COERR groups similar contexts and learns the demand pattern for a group of contexts instead of learning the service demand pattern for each context $x\in\mathcal{X}$. The rationale behind this strategy is the following intuition: an SBS will have similar service demand when its contexts are similar. This is a natural assumption in practice and is used in many existing MAB algorithms \cite{muller2017context,chen2018contextual} to facilitate the learning of context-aware service demand. To be specific, COERR groups contexts by partitioning the context space into small hypercubes. The context space $\mathcal{X} = [0,1]^D$ is split into $(h_T)^D$ hypercubes give the time horizon $T$, where each hypercube is $D$-dimensional with identical size $\frac{1}{h_T}\times\dots\times\frac{1}{h_T}$. Here, $h_T$ is an important input parameter to be designed to guarantee algorithm performance. These hypercubes are collected in the context partition $\mathcal{P}_T$. Since the edge system is geographically distributed, different SBSs may exhibit distinct service demand patterns for the same context because of the SBS locations (e.g., considering the time factor, an SBS located in a school zone may have higher service demand during daytime and lower service demand during night while an SBS located in a residential area tends to have lower service demand during daytime and higher service demand at night). Therefore, ASP should learn the service demand for each SBS.  

Now, a key issue is estimating the service demand pattern for context hypercubes at each SBS. Note that COERR runs with \emph{cold-start} and hence it needs to collect the historical service demand data for context hypercubes by renting computation resource at SBSs and observing the received service demand in order to produce accurate demand estimation. Specifically, (i) for each SBS $n\in\mathcal{N}$, ASP keeps counters $C^t_n(p)$, one for each hypercube $p\in\mathcal{P}_T$, up to time slot $t$, indicating the number of times that ASP rents computation resource at SBS $n$ (i.e., $f^\tau_n > 0, \tau < t$) when the context $x^\tau_n$ of SBS $n$ belongs to hypercube $p$, i.e. $x^\tau_n \in p, \tau < t$; (ii) ASP keeps an experience $\mathcal{E}^t_n(p)$ for hypercube $p$ at each SBS $n$ up to time slot $t$ storing the context-demand pair $(x^\tau_n,\lambda^\tau_n)$ when the rental decision $f^\tau_n > 0$ is taken and the context of SBS $n$ satisfies $x^\tau_n \in p$. Fig.\ref{fig:illu_context_space} illustrates an example of context space partition and counter/experience update.
\begin{figure*}[tb]
	\centering
	\includegraphics[width=0.9\linewidth]{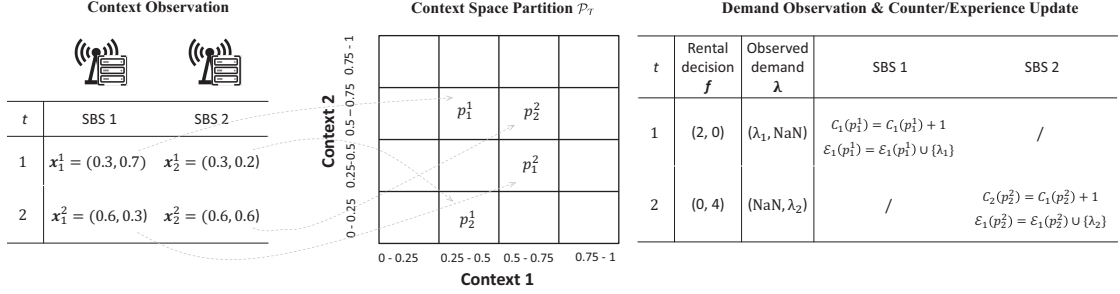}
	\vspace{-0.2 in}
	\caption{Illustration for context space partition and counter/experience update. In the time slot $t=1$, only SBS 1 is rented to host the service and therefore the counter/experience is updated for SBS 1 only.}
	\label{fig:illu_context_space}
		\vspace{-0.25 in}
\end{figure*}

Given the experience $\mathcal{E}^t_n(p)$, the service demand estimation for SBSs $n$ with context $x^t_n$ in hypercube $p$ is obtained by an estimator $\Theta_n$:
\begin{align}
\hat{\lambda}_n(p) = \Theta_n(\mathcal{E}^t_n(p)),
\end{align}
We do not specify the estimator used in COERR since the proposed algorithm is compatible with a variety of estimators. Note that storing all the experience may be unnecessary for certain estimators that can be updated in a recursive manner, e.g., recursive Bayesian estimator \cite{bergman1999recursive} and recursive least square estimator \cite{lee1981recursive}. Usually, a certain amount of historical data is required for an estimator to produce an accurate-enough estimation, which is theoretically characterized by Probably Approximately Correct (PAC) \cite{valiant2013probably} as follows:
\begin{assumption}[PAC Property]\label{ass:PAC}
	For an arbitrary hypercube $p\in\mathcal{P}_T$ at a SBS $n$, the estimator $\Theta_n$ satisfies Probably Approximately Correct (PAC) property below:
	\begin{align}
	\Pr\left\{\lvert \Theta_n(\mathcal{E}^t_n(p))- \mu_n(p) \rvert> \epsilon \right\} \leq \sigma_n(\epsilon,C^t_n(p))
	\end{align}
	where $\mu_n(p) = \mathbb{E}_{x \in p}\left[ \mu_n(x) \right]$ (the expectation is taken on the distribution of context $x$ in hypercube $p$) and $\frac{\partial\sigma(\epsilon,C^t_n(p))}{\partial C^t_n(p)}\leq 0$.
\end{assumption}
The term $\sigma(\epsilon,C^t_n(p))$ is assumed to decrease as $C^t_n(p)$ increases, i.e. $\frac{\partial\sigma(\epsilon,C^t_n(p))}{\partial C^t_n(p)}\leq 0$, which ensures that more historical data will produce a better estimation. The PAC property is critical in guaranteeing the performance of COERR. 

It is worth empathizing that service demand estimation, though important, is not the major challenge to conquer since we can always acquire enough data for each hypercube to produce an accurate estimation if the time horizon $T$ is large. A more challenging issue is to decide in each time slot whether the current demand estimation is good-enough to guide the edge resource rental (referred as \emph{exploitation}) or more service demand data should be collected to improve the demand estimation for a certain hypercube (referred as \emph{exploration}). COERR balances the exploration and exploitation phases during online decision-making in order to maximize the utility of ASP up to a finite time horizon $T$. In addition, COERR also smartly decides the amount of computation resources to rent at different phases to achieve different purposes: in the exploration, COERR utilizes the budget to collect as much service demand data as possible to improve the estimation while in the exploitation, COERR aims to maximize the ASP utility under the budget constraint.     

Algorithm \ref{alg:COERR} presents the pseudo-code of COERR. In each time slot $t$, ASP first observes the context $\x^t = \{x_n^t\}_{n\in\mathcal{N}}$ of all SBSs in $\mathcal{N}$ and determines for each SBS $n$ the hypercube $p^t_n \in \mathcal{P}_T$ to which $x^t_n$ belongs to, i.e. $x^t_n\in p^t_n$ holds. The hypercubes of all SBSs are collected in $\p^t = \{p^t_n\}_{n\in\mathcal{N}}$. The estimated service demand for SBS $n$ in time slot $t$ is obtained by $\hat{\lambda}_n(p^t_n) = \Theta_n(\mathcal{E}^t_n(p^t_n))$. Estimations of all SBSs are collected in $\hat{\bm{\lambda}}^t = \{\hat{\lambda}_1(p^t_1),\hat{\lambda}_2(p^t_2),\dots,\hat{\lambda}_N(p^t_N)\}$. COERR is in either an exploration phase or an exploitation phase. To determine the phase for current time slot, the algorithm checks whether current contexts of SBSs have been sufficiently explored. To this end, we define the set of \emph{under-explored} SBSs $\mathcal{U}^t$ based on the contexts $\x^t$ observed and counters $C^t_n(p^t_n)$ in time slot $t$:
\begin{align}\label{eq:under_explored_set}
	\mathcal{U}^t(\x^t) = \left\{ n\in\mathcal{N} \mid C^t_n(p^t_n) < K(t), x^t_n \in p^t_n\right\}
\end{align}
where $K(t)$ is a deterministic, monotonically increasing control function, which is an input of COERR to determine whether the amount of collected historical data in hypercube $p^t_n$ is large enough to produce an accurate service demand estimation for exploitation in time slot. $K(t)$ has to be designed appropriately based on the estimator property $\sigma(\epsilon,C^t_n(p))$ and the parameter $h_T$ to balance the trade-off between exploration and exploitation (discussed later in Section \ref{sec:parameter_design}).  
\begin{algorithm}[tb]
	\caption{Context-aware Online Edge Resource Rental (COERR)} \label{alg:COERR}
	\begin{algorithmic}[1]
		\State \textbf{Input}: $T$, $h_{T}$, $K(t)$.
		\State \textbf{Initialization}: create partition $\mathcal{P}_{T}$ on context space $\mathcal{X}$; set $C_{n}(p)=0, \mathcal{E}_{n}(p) = \emptyset, \forall n, \forall p \in \mathcal{P}_{T}$; choose an estimator for each SBS $\Theta_n$.
		\For {$t=1,\dots,T$}
		\State The ASP observes the context of each SBS $n\in\mathcal{N}$ and collects them in $\x^t = \{x^t_n\}_{n\in\mathcal{N}}$;
		\State Determine $\p^t=\{p^t_n\}_{n\in\mathcal{N}}, p^t_n\in\mathcal{P}_T$ such that $x^t_{n}\in p^t_{n}$;
		\State Identify under-explored SBSs $\mathcal{U}^t$ as in \eqref{eq:under_explored_set};
		\If {$\mathcal{U}^t \neq \emptyset$}: \Comment{\textit{Exploration}}
		\If{$\sum_{n\in\mathcal{U}^t} w_n (f^{\min}_n) \geq B$}
		\State Select a subset $\mathcal{S}^t \subseteq \mathcal{U}^t$ as in \eqref{eq:explore_pick};
		\State The rental decision at SBS $n$ is $f^t_n = f^{\min}_n \cdot \bm{1}\{n\in\mathcal{S}^t\}$;
		\Else~$\left(\text{i.e.,}\sum_{n\in\mathcal{U}^t} w_n (f^{\min}_n) < B\right)$: Get $\f^t$ by soling the optimization problem in \eqref{opt:semi_explore};
		\EndIf
		\Else~$\left(\mathcal{U}^t = \emptyset\right)$: \Comment{\textit{Exploitation}}
		\State Get the rental decision $\f^t$ by solving the optimization problem in \eqref{opt:exploit};
		\EndIf
		\State Observe service demand receive at SBSs with $f^t_n > 0$;
		\For {each SBS $n$ with $f^t_n > 0$} \Comment{\textit{Update counters, experiences, and estimations}}
		\State Update counters: $C_n(p^t_{n})=C_n(p^t_{n})+1$; \label{line:counter_update}
		\State Update experiences: $\mathcal{E}_n(p^t_{n}) = \mathcal{E}_n(p^t_{n}) \cup (x^t_n,\lambda^t_n)$; \label{line:experience_update}
		\State Update estimations: $\hat{\lambda}_n(p^t_n) = \Theta_n(\mathcal{E}_n(p^t_{n}))$;
		\EndFor
		\EndFor
		\State \textbf{Return}: $\f^t, t = 1,2,\dots,T$.
	\end{algorithmic}
\end{algorithm}

\subsubsection{Exploration}
If the under-explored set is non-empty, i.e., $\mathcal{U}^t \neq \emptyset$, COERR enters the exploration phase. We may have two cases in exploration: (i) If $\sum_{n\in\mathcal{U}^t} w_n(f^{\min}_n) \geq B$, COERR can explore only a subset of SBSs in $\mathcal{U}^t$. Intuitively, we want to collect service demand data for more under-explored SBSs. Therefore, COERR rents only $f^{\min}_n$ at SBSs such that the edge service can be deployed at more under-explored SBSs. Specifically, COERR selects under-explored SBSs sequentially as follows:
\begin{align}\label{eq:explore_pick}
	s_k = \argmin\nolimits_{n\in\mathcal{U}^t\backslash\{s_i\}_{i=1}^{k-1}} w_n(f^{\min}_n)
\end{align} 
If the SBS defined in \eqref{eq:explore_pick} is not unique, ties are broken arbitrarily. The selection ends if the iteration $k$ satisfies $\sum_{i=1}^k w_{s_i}(f^{\min}_{s_i}) \leq B$ and $\sum_{i=1}^{k+1} w_{s_i}(f^{\min}_{s_i}) > B$. The rental decision of ASP at SBS $n$ is $f^t_n = f^{\min}_n \cdot \bm{1}\{n\in\mathcal{S}^t\}$ where $\mathcal{S}^t=\{s_1,\dots,s_k\}$. The selection in \eqref{eq:explore_pick} ensures that the number of under-explored SBSs with $f^t_n > 0$ is maximized. (ii) If $\sum_{n\in\mathcal{U}^t} w_n (f^{\min}_n) < B$, COERR rents computation resource $f^{\min}_n$ at all under-explored SBSs in $\mathcal{U}^t$. Note the there is still $B-\sum_{n\in\mathcal{U}^t} w_n (f^{\min}_n)$ budget left. The rest budget is used to rent computation resources at explored SBSs $n\in \mathcal{N}\backslash \mathcal{U}^t$ based on the current estimation $\hat{\bm{\lambda}}^t$. The rental decision of ASP $\f^t$ in this case can be obtained by:
\begin{subequations}\label{opt:semi_explore}
	\begin{align}
		&\max_{\f^t} ~ U^t(\f^t; \hat{\bm{\lambda}}^t) \\
		\text{s.t.} ~~~ & f^t_n = f^{\min}_n, \forall n \in \mathcal{U}^t \label{cstr:under_explored}\\
		& \eqref{cstr:sub_c1},\eqref{cstr:sub_c2}, \eqref{cstr:sub_c3}
	\end{align}
\end{subequations}  
Constraint \eqref{cstr:under_explored} ensures that the computation resource $f^{\min}_n$ is rented at under-explored SBSs.

\subsubsection{Exploitation}
If the set of under-explored SBSs is empty, i.e., $\mathcal{U}^t = \emptyset$, then COERR enters the exploitation phase in which an optimal rental decision $\f^t$ is determined based on the current service demand $\hat{\bm{\lambda}}^t$. The rental decision $\f^t$ is obtained by solving:
\begin{align}\label{opt:exploit}
	\max\nolimits_{\f^t} ~ U^t(\f^t; \hat{\bm{\lambda}}^t) \quad \text{s.t.} ~ \eqref{cstr:sub_c1},\eqref{cstr:sub_c2}, \eqref{cstr:sub_c3}
\end{align}

\subsection{Performance Analysis}
Next, we give an upper performance bound of COERR in terms of the \emph{regret}. The regret upper bound is derived based on the natural assumption that the service demands received by an SBS are similar when its contexts are similar. This assumption is formalized by the following H\"{o}lder condition \cite{muller2017context,chen2018contextual} for each SBS $n\in\mathcal{N}$.
\begin{assumption}[H\"{o}lder Condition] \label{assu:holder}
	For an arbitrary SBS $n\in\mathcal{N}$, there exists $L>0$, $\alpha>0$ such that for any $x,x^\prime\in\mathcal{X}$, it holds that $|\mu_n(x)-\mu_n(x^\prime)| \leq L \|x-x^\prime\|^{\alpha}$, where $\|\cdot\|$ denotes the Euclidean norm in $\mathbb{R}^{D}$.
\end{assumption}
Note that this assumption is needed for the analysis of regret but the proposed algorithm can still be applied if it does not hold true. In that case, however, a regret bound might not be guaranteed. We aim to design the input parameters $h_T$, $K(t)$ in the proposed algorithm to achieve a sublinear $R(T) = O(T^{\gamma})$ with $\gamma<1$. A sublinear regret bound guarantees that the proposed algorithm has an asymptotically optimal performance since $\lim_{T\to\infty} \frac{R(T)}{T} = 0$ holds. This means that the online decision made by COERR converges to the Oracle solution. 

Since any time slot is either in exploration or exploitation, we divide the regret two parts $R(T) = R_\text{explore}(T) + R_\text{exploit}(T)$, where $R_\text{explore}(T)$, $R_\text{exploit}(T)$ are the regrets due to exploration and exploitation, respectively. These two parts will be bounded separately to get the total regret bound. We first give an upper bound for exploration regret. 

\begin{lemma}\label{lemma:R_explore}
	(Bound of $\mathbb{E}[R_\text{explore}(T)]$.) Given the input parameters $h_T$ and $K(t)$, the regret $\mathbb{E}[R_\text{explore}(T)]$ is bounded by:
	\begin{align*}
		\mathbb{E}[R_\text{explore}(T)] \leq \frac{NB\lambda^{\max}d^{\max}}{w^{\min}}(h_T)^D\lceil K(T)\rceil
	\end{align*} 
	where $\lambda^{\max} = \max_{f^t_n} \lambda^{\max}(f^t_n)$ and $w^{\min} = \min_{f\in F_n,\forall n} w_n(f)$.
\end{lemma}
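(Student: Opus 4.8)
The plan is to write $\mathbb{E}[R_\text{explore}(T)]$ as a sum of per-slot regrets taken over the exploration slots, and then bound two things separately: (i) the worst-case regret incurred in any single slot, and (ii) the total number of exploration slots over the horizon. The product of these two quantities is exactly the claimed bound, so the whole argument reduces to two elementary estimates.

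For the per-slot estimate, I would first observe that for any feasible rental decision $\f$ the utility $U^t(\f;\bm{\lambda}^t)=\sum_{n\in\mathcal{N}}u^t_n(f_n;\lambda^t_n)$ is nonnegative, and that each nonzero term satisfies $u^t_n(f_n;\lambda^t_n)=\min\{\lambda^t_n,\lambda^{\max}(f_n)\}\,\Delta^t_n(f_n)\le \lambda^{\max}d^{\max}$ since $\Delta^t_n\le d^{\max}$. Because the Oracle decision $\f^{*,t}$ respects the budget $\sum_n w_n(f^{*,t}_n)\le B$ and renting any single SBS costs at least $w^{\min}$, at most $B/w^{\min}$ of the SBSs can have $f^{*,t}_n>0$; hence $\mathbb{E}[U^t(\f^{*,t};\bm{\lambda}^t)]\le \tfrac{B}{w^{\min}}\lambda^{\max}d^{\max}$ pointwise in the demand realization. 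Dropping the nonnegative term $\mathbb{E}[U^t(\f^t;\bm{\lambda}^t)]$, the regret contributed by any exploration slot is at most $\tfrac{B}{w^{\min}}\lambda^{\max}d^{\max}$.

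For the count of exploration slots, the key is a charging argument against the counters $C_n(p)$. Whenever COERR rents SBS $n$ during an exploration slot with context $x^t_n\in p$, the under-explored test $C^t_n(p^t_n)<K(t)\le K(T)$ forces $C^t_n(p)\le\lceil K(T)\rceil-1$, so after at most $\lceil K(T)\rceil$ such rentals one has $C_n(p)\ge\lceil K(T)\rceil\ge K(t)$ for all $t$ and $n$ is never again under-explored for $p$. Summing over the $N$ SBSs and the $(h_T)^D$ hypercubes of $\mathcal{P}_T$, the total number of ``exploration rentals'' over the whole run is at most $N(h_T)^D\lceil K(T)\rceil$. It then remains to check that every exploration slot produces at least one such rental: when $\sum_{n\in\mathcal{U}^t}w_n(f^{\min}_n)<B$ all under-explored SBSs are rented, and when $\sum_{n\in\mathcal{U}^t}w_n(f^{\min}_n)\ge B$ the greedy set $\mathcal{S}^t$ from \eqref{eq:explore_pick} is nonempty because its cheapest member is affordable within $B$ (each SBS being individually rentable). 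Hence the number of exploration slots is itself at most $N(h_T)^D\lceil K(T)\rceil$, and multiplying by the per-slot bound gives $\mathbb{E}[R_\text{explore}(T)]\le \tfrac{NB\lambda^{\max}d^{\max}}{w^{\min}}(h_T)^D\lceil K(T)\rceil$.

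The step I expect to be the main obstacle is the ``at least one exploration rental per exploration slot'' claim, i.e.\ ruling out $\mathcal{S}^t=\emptyset$ in the budget-tight branch of exploration: this is precisely where one needs the (implicit) standing assumption that every SBS can be rented at its minimum level within the budget, and it is also what prevents the charging argument from being defeated by slots in which many under-explored SBSs are present but left unfunded. The per-slot regret estimate and the $\lceil K(T)\rceil$ ceiling are routine once that point is settled.
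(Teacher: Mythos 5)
Your proposal is correct and follows essentially the same route as the paper's proof: bound the worst-case per-slot regret by $\frac{B}{w^{\min}}\lambda^{\max}d^{\max}$ (at most $B/w^{\min}$ rented SBSs, each contributing at most $\lambda^{\max}d^{\max}$), and bound the number of exploration slots by $N(h_T)^{D}\lceil K(T)\rceil$ via a charging argument on the counters $C_n(p)$. The only difference is that you make explicit a point the paper leaves implicit --- that every exploration slot must actually rent at least one under-explored SBS (so that some counter is incremented and the slot can be charged), which requires the standing assumption $w_n(f^{\min}_n)\le B$ --- and this added care is a strength rather than a deviation.
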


\begin{proof}
	Suppose time slot $t$ is an exploration phase, then according to the algorithm design, the set of under-explored SBSs is non-empty. Therefore, there must exist $n\in\mathcal{N}$ and a hypercube $p^t_n$ satisfies $C^t_n(p^t_n)<K(t)$. Clearly, there can be at most $\lceil K(t) \rceil$ exploration phases in which computation resources at SBS $n$ are rented by the ASP when its context satisfies $x^\tau_n \in p^t_n, \tau < t$. 
	
	In each of these exploration phase, let $\Psi_n^{\max,t} \triangleq \max_{f_n,f^{\prime}_n \in F_n}|\Delta^t_n(f_n)-\Delta^t_n(f^\prime_n)|$ be the maximum utility loss for one task due to a wrong rental decision $f_n$ at SBS $n$. Recall that the per-task delay reduction is bounded by $\Delta^t_n(f^t_n) \leq d^{\max}, \forall n, \forall t$ and therefore it holds that $\Psi_n^{\max,t}\leq d^{\max}$. Let $\lambda^{\max} = \max_{f^t_n} \lambda^{\max}(f^t_n)$, then the service demand $\lambda_n(x^t_n)$ received by SBS $n$ must be bounded by $\lambda^{\max}$, the maximum utility loss at a SBS is bounded by $\lambda^{\max}d^{\max}$. Let $w^{\min} = \min_{f\in F_n,\forall n} w_n(f)$, the maximum number of SBSs with the rental decision $f^t_n > 0$ is bounded by $B/w^{\min}$. Therefore, the regret incurred in one time slot is bounded by $\lambda^{\max}d^{\max}B/w^{\min}$. Since there are at most $N(h_T)^D\lceil K(T)\rceil$ exploration phases in $T$, the regret incurred by the exploration is bounded by:
	\begin{align*}
		\mathbb{E}[R_\text{explore}(T)] \leq \frac{NB\lambda^{\max}d^{\max}}{w^{\min}}(h_T)^D\lceil K(T)\rceil
	\end{align*} 
	The proof is completed.
\end{proof}
Lemma \ref{lemma:R_explore} shows that the order of $R_\text{explore}(T)$ is determined by the number of hypercubes $(h_T)^D$ in partition $\mathcal{P}_T$ and the control function $K(T)$. 

\begin{lemma}\label{lemma:R_exploit}
	(Bound of $\mathbb{E}[R_\text{exploit}(T)]$.) Given the input parameter $h_T$ and $K(t)$, if the H\"{o}lder condition holds true and the additional condition $2H(t) + 2d^{\max}NLD^{\frac{\alpha}{2}}h_{T}^{\alpha} \leq At^\theta$ is satisfied with some $H(t)>0, A>0, \theta<0$ for all $t$, then $\mathbb{E}[R_\text{exploit}(T)]$ is bounded by:
	\begin{align}\label{eq:R_exploit}
		\mathbb{E}[R_\text{exploit}(T)] \leq \frac{2|\mathcal{F}|B\lambda^{\max}d^{\max}}{w^{\min}}\sum_{t=1}^T\sum_{n\in\mathcal{N}}\sigma_n\left(\frac{H(t)}{d^{\max}N},K(t)\right) + 3d^{\max}NLD^{\frac{\alpha}{2}}h_{T}^{-\alpha}T + AT^{\theta+1}
	\end{align} 
\end{lemma}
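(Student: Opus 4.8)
The plan is to bound the regret of a single exploitation slot and then sum over $t$. Fix a slot $t$ in the exploitation phase and condition on the history up to $t$; this determines the contexts $\x^t$, the hypercubes $\p^t$ with $x^t_n\in p^t_n$, all counters, the estimate $\hat{\bm{\lambda}}^t$, and hence the chosen decision $\f^t=\argmax_{\f}U^t(\f;\hat{\bm{\lambda}}^t)$ over feasible $\f$ (feasibility is demand-independent, so $\f^{*,t}$ is also feasible here). Because $t$ is an exploitation slot, $\mathcal{U}^t=\emptyset$, so $C^t_n(p^t_n)\ge K(t)$ for every $n$. I would then introduce the ``all-estimates-accurate'' event: with $\epsilon\triangleq H(t)/(d^{\max}N)$, set $\mathcal{G}^t=\{\,|\hat\lambda_n(p^t_n)-\mu_n(p^t_n)|\le\epsilon~\forall n\,\}$. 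Since $\Delta^t_n(\cdot)\le d^{\max}$ and $|\min\{a,c\}-\min\{b,c\}|\le|a-b|$, each $u^t_n(f_n;\cdot)$ is $d^{\max}$-Lipschitz in its demand argument, so on $\mathcal{G}^t$ the estimated utility is within $Nd^{\max}\epsilon=H(t)$ of the hypercube-averaged utility $U^t(\f;\bm{\mu}^t_{\mathcal{P}})$ (with $\mu^t_{\mathcal{P},n}=\mu_n(p^t_n)$), uniformly over feasible $\f$.

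Next I would remove the discretization error. Each hypercube of $\mathcal{P}_T$ has diameter $\sqrt{D}\,h_T^{-1}$, so Assumption \ref{assu:holder} gives $|\mu_n(p^t_n)-\mu_n(x^t_n)|\le LD^{\alpha/2}h_T^{-\alpha}$, hence $|U^t(\f;\bm{\mu}^t_{\mathcal{P}})-U^t(\f;\bm{\mu}^t)|\le d^{\max}NLD^{\alpha/2}h_T^{-\alpha}$ uniformly over feasible $\f$. Writing $\delta(t)\triangleq H(t)+d^{\max}NLD^{\alpha/2}h_T^{-\alpha}$, on $\mathcal{G}^t$ we obtain $|U^t(\f;\hat{\bm{\lambda}}^t)-U^t(\f;\bm{\mu}^t)|\le\delta(t)$ for all feasible $\f$, and the optimality $U^t(\f^t;\hat{\bm{\lambda}}^t)\ge U^t(\f^{*,t};\hat{\bm{\lambda}}^t)$ gives the sandwich
\begin{align*}
U^t(\f^{*,t};\bm{\mu}^t)-U^t(\f^t;\bm{\mu}^t)\le U^t(\f^{*,t};\hat{\bm{\lambda}}^t)-U^t(\f^t;\hat{\bm{\lambda}}^t)+2\delta(t)\le 2\delta(t).
\end{align*}
Here I treat $U^t(\cdot;\bm{\mu}^t)$ as the expected per-slot utility as in the Oracle subproblem \eqref{opt:subproblem}, passing between $\mathbb{E}[U^t(\cdot;\bm{\lambda}^t)]$ and $U^t(\cdot;\bm{\mu}^t)$ with the same $d^{\max}$-Lipschitz/expectation estimate.

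On the complementary event $\bar{\mathcal{G}}^t$ I would fall back on the crude bound used for Lemma \ref{lemma:R_explore}: at most $B/w^{\min}$ SBSs are rented, each contributing at most $\lambda^{\max}d^{\max}$, so every feasible utility lies in $[0,B\lambda^{\max}d^{\max}/w^{\min}]$ and the slot-$t$ regret is at most $2B\lambda^{\max}d^{\max}/w^{\min}$. To charge this only with the right probability I would decompose $\bar{\mathcal{G}}^t$ by which action is selected: for any feasible $\f$ with $U^t(\f^{*,t};\bm{\mu}^t)-U^t(\f;\bm{\mu}^t)>2\delta(t)$, the event $\{\f^t=\f\}$ forces the estimation error at $\f$ or at $\f^{*,t}$ to exceed $\delta(t)$, hence $|\hat\lambda_n(p^t_n)-\mu_n(p^t_n)|>\epsilon$ for some $n$; by a union bound over $n$, Assumption \ref{ass:PAC}, monotonicity of $\sigma_n$ in the count, and $C^t_n(p^t_n)\ge K(t)$, this has probability at most $\sum_n\sigma_n(H(t)/(d^{\max}N),K(t))$. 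Summing over the at most $|\mathcal{F}|$ such actions, the expected regret of slot $t$ is at most $2\delta(t)+\frac{2|\mathcal{F}|B\lambda^{\max}d^{\max}}{w^{\min}}\sum_n\sigma_n(H(t)/(d^{\max}N),K(t))$. Finally I would sum over $t=1,\dots,T$ (the number of exploitation slots is at most $T$ and each per-slot bound is nonnegative): the $\sigma_n$-terms give the first term of \eqref{eq:R_exploit}, while $\sum_t 2\delta(t)=\sum_t 2H(t)+2d^{\max}NLD^{\alpha/2}h_T^{-\alpha}T$ is handled by invoking the standing condition $2H(t)+2d^{\max}NLD^{\alpha/2}h_T^{-\alpha}\le At^\theta$ with $\theta<0$, which turns $\sum_t 2H(t)$ into a term of order $AT^{\theta+1}$ and leaves the discretization contribution proportional to $d^{\max}NLD^{\alpha/2}h_T^{-\alpha}T$; collecting constants yields \eqref{eq:R_exploit}.

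The step I expect to be the main obstacle is this last probabilistic decomposition: correctly coupling the three error sources — the PAC estimation error, the H\"older discretization error, and the randomness of which action is chosen — so that the crude $B\lambda^{\max}d^{\max}/w^{\min}$ penalty is incurred only on a low-probability event, and in particular making the action-wise union bound (the source of the $|\mathcal{F}|$ factor) compatible with replacing the random count $C^t_n(p^t_n)$ by the deterministic threshold $K(t)$ inside $\sigma_n$. The rest is bookkeeping driven by the assumed control-function condition.
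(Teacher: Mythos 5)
Your proof is correct and reaches the stated bound, but it takes a genuinely different route from the paper's. You work with a single ``clean event'' $\mathcal{G}^t=\{|\hat{\lambda}_n(p^t_n)-\mu_n(p^t_n)|\le H(t)/(d^{\max}N)\ \forall n\}$, show a deterministic sandwich $U^t(\f^{*,t};\bm{\mu}^t)-U^t(\f^t;\bm{\mu}^t)\le 2\delta(t)$ on it, and pay the crude $B\lambda^{\max}d^{\max}/w^{\min}$ penalty only on its complement. The paper instead partitions the action set into suboptimal and near-optimal decisions $\mathcal{L}(\p^t)$ relative to an auxiliary reference decision $\dot{\f}(\p^t)$ computed from the hypercube centers, bounds the probability of selecting a suboptimal action via three events $E_1,E_2,E_3$ (with $E_3$ shown impossible under the condition on $H(t)$), and bounds the near-optimal loss deterministically by repeated applications of the H\"{o}lder condition. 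Your argument is more direct --- it never needs $\dot{\f}(\p^t)$ or $\mathcal{L}(\p^t)$, and in fact $\Pr\{\bar{\mathcal{G}}^t\}\le\sum_{n}\sigma_n\left(H(t)/(d^{\max}N),K(t)\right)$ already suffices, so the action-wise union bound you add only to reproduce the $|\mathcal{F}|$ factor is superfluous; your route actually yields a slightly stronger constant (and $2d^{\max}NLD^{\frac{\alpha}{2}}h_T^{-\alpha}T$ in place of the paper's $3d^{\max}NLD^{\frac{\alpha}{2}}h_T^{-\alpha}T$). What the paper's structure buys is reusability: the suboptimal/near-optimal split carries over verbatim to the $\delta$-approximation analysis of Theorem \ref{theo:delat_regret_bound}, where exploitation no longer exactly maximizes $U^t(\cdot;\hat{\bm{\lambda}}^t)$ and your sandwich step $U^t(\f^t;\hat{\bm{\lambda}}^t)\ge U^t(\f^{*,t};\hat{\bm{\lambda}}^t)$ would need the $1/\delta$ correction. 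Two minor remarks: the condition in the lemma statement should read $h_T^{-\alpha}$ (as in the paper's proof and in your argument), and both your write-up and the paper's share the same harmless informalities --- identifying $\mathbb{E}[U^t(\cdot;\bm{\lambda}^t)]$ with $U^t(\cdot;\bm{\mu}^t)$ despite the $\min\{\cdot,\lambda^{\max}(f_n)\}$, applying the PAC bound with the random counter replaced by $K(t)$, and bounding $\sum_{t=1}^T t^{\theta}$ by $T^{\theta+1}$ rather than by $T^{\theta+1}/(\theta+1)$ --- none of which affects the leading order.
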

\begin{proof}
See in online Appendix \ref{proof:lemma:exploit} \cite{onlineappendix}.
\end{proof}
Lemma \ref{lemma:R_exploit} indicates that, besides the input parameters $h_T$ and $K(t)$, the regret incurred in exploitation also depends on the estimator's PAC property $\sigma_n(\cdot,\cdot)$. Based on the above two Lemmas, we will have the following Theorem for the upper bound of $\mathbb{E}[R(T)]$.
\begin{theorem}\label{theo:regret_bound_raw}
	Given the input parameter $h_T$ and $K(t)$, if the H\"{o}lder condition holds true and the additional condition $2H(t) + 2d^{\max}NLD^{\frac{\alpha}{2}}h_{T}^{\alpha} \leq At^\theta$ is satisfied with some $H(t)>0, A>0, \theta<0$ for all $t$, then $\mathbb{E}[R(T)]$ is bounded by:
	\begin{align*}
		\mathbb{E}[R(T)] \leq & \frac{NB\lambda^{\max}d^{\max}}{w^{\min}}(h_T)^D\lceil K(T)\rceil\\
		& +\frac{2|\mathcal{F}|B\lambda^{\max}d^{\max}}{w^{\min}}\sum_{t=1}^T\sum_{n\in\mathcal{N}}\sigma_n\left(\frac{H(t)}{d^{\max}N},K(t)\right) + 3d^{\max}NLD^{\frac{\alpha}{2}}h_{T}^{-\alpha}T + AT^{\theta+1}
	\end{align*}
\end{theorem}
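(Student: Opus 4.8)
The plan is to derive the bound by combining the two preceding lemmas after splitting the regret along the exploration/exploitation dichotomy. First I would note that, by the control flow of Algorithm \ref{alg:COERR}, in every time slot $t$ the set of under-explored SBSs $\mathcal{U}^t$ is either non-empty (exploration) or empty (exploitation), and these two cases are mutually exclusive and jointly exhaustive. Hence each time slot is unambiguously labelled, the per-slot regret contributions partition accordingly, and $R(T) = R_\text{explore}(T) + R_\text{exploit}(T)$ as already observed in the text; taking expectations and using linearity gives $\mathbb{E}[R(T)] = \mathbb{E}[R_\text{explore}(T)] + \mathbb{E}[R_\text{exploit}(T)]$.

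Next I would bound each summand. For the first, Lemma \ref{lemma:R_explore} applies unconditionally for any choice of $h_T$ and $K(t)$ and gives $\mathbb{E}[R_\text{explore}(T)] \leq \frac{NB\lambda^{\max}d^{\max}}{w^{\min}}(h_T)^D\lceil K(T)\rceil$. For the second, the hypotheses of the present theorem --- the H\"{o}lder condition of Assumption \ref{assu:holder} together with the inequality $2H(t) + 2d^{\max}NLD^{\frac{\alpha}{2}}h_{T}^{\alpha} \leq At^\theta$ for all $t$, with some $H(t)>0$, $A>0$, $\theta<0$ --- are precisely the hypotheses of Lemma \ref{lemma:R_exploit}, so that lemma applies verbatim and bounds $\mathbb{E}[R_\text{exploit}(T)]$ by $\frac{2|\mathcal{F}|B\lambda^{\max}d^{\max}}{w^{\min}}\sum_{t=1}^T\sum_{n\in\mathcal{N}}\sigma_n(\frac{H(t)}{d^{\max}N},K(t)) + 3d^{\max}NLD^{\frac{\alpha}{2}}h_{T}^{-\alpha}T + AT^{\theta+1}$. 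Adding the two bounds termwise reproduces exactly the claimed expression.

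The heavy lifting has already been done inside Lemma \ref{lemma:R_exploit} (the proof of which is deferred to the appendix), so for this theorem there is essentially no obstacle to overcome; the only point requiring care is that the phase classification is genuinely a partition of the time horizon, so that no slot is double-counted or dropped in $R(T) = R_\text{explore}(T) + R_\text{exploit}(T)$ --- this is immediate from the exclusive-and-exhaustive branch on $\mathcal{U}^t \neq \emptyset$ versus $\mathcal{U}^t = \emptyset$ in Algorithm \ref{alg:COERR}. I would also state explicitly that the theorem introduces no conditions beyond those already required by the two lemmas, so the combination is licit as written, and that the resulting bound is still in a general, estimator-agnostic form (it depends on $\sigma_n$, $h_T$, $K(t)$, $H(t)$, $A$, $\theta$), to be specialized later in Section \ref{sec:parameter_design}.
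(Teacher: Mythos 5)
Your proposal is correct and matches the paper exactly: the paper obtains Theorem \ref{theo:regret_bound_raw} by the same decomposition $R(T) = R_\text{explore}(T) + R_\text{exploit}(T)$ (every slot being either an exploration or an exploitation phase) and then summing the bounds of Lemma \ref{lemma:R_explore} and Lemma \ref{lemma:R_exploit}. Your added remarks on the partition being exclusive and exhaustive and on the hypotheses coinciding with those of the lemmas are accurate but not essential.
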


The regret upper bound in Theorem \ref{theo:regret_bound_raw} is given with any input parameters $h_t$, $K(t)$ and applied estimators. In addition, there is an additional condition $2H(t) + 2d^{\max}NLD^{\frac{\alpha}{2}}h_{T}^{\alpha} \leq At^\theta$ should be satisfied when designing algorithm parameters $h_T$. However, we cannot give a specific design of $h_T$ here to guarantee the sublinear regret since it depends on the PAC property of the applied estimator. In the next subsection, we will design input $h_T$ and $K(t)$ based on the PAC property $\sigma_n(\cdot,\cdot)$ of a Maximum Likelihood Estimator, which satisfy the additional condition posed in Theorem \ref{theo:regret_bound_raw} and guarantee a sublinear regret $O(T^\gamma),\gamma<1$. Other parameters $H(t)$, $A$, $\theta$ are not determinative which will be later shown in parameter design.

\subsection{Example: Maximum Likelihood Estimator}\label{sec:parameter_design}
Note that the regret depends partially on the estimator property $\sigma_n(\cdot,\cdot)$ and hence we need to specify the estimators used by SBSs before designing the algorithm parameters $h_T$ and $K(t)$. Here, we take Maximum Likelihood Estimation (MLE) as an example. The purpose of a MLE estimator $\Theta_n(\mathcal{E}(p^t_n))$ is to estimate the expected service demand $\mu_n(p^t_n)$ for hypercube $p^t_n$. We assume that the historical service demands $\lambda_n(x^\tau), x^\tau\in p^t_n$ collected in $\mathcal{E}(p^t_n)$ follow a normal distribution denoted by $\mathcal{N}\left(\mu_n(p^t_n),\delta_n^2(p^t_n)\right)$, where $\delta_n^2(p^t_n)$ is the standard deviation. Then, an unbiased estimation for $\mu_n(p^t_n)$ using MLE is:
\begin{align}\label{eq:MLE}
	\hat{\lambda}_n(p^t_n) = \frac{1}{C^t_n(p^t_n)}\sum\nolimits_{(x,\lambda)\in \mathcal{E}(p^t_n)}\lambda
\end{align}
Note that the normal distribution of historical service demand in $\mathcal{E}(p^t_n)$ is only used for deriving the above MLE estimator. COERR can be applied other historical data distributions, but the unbiased MLE estimator can be different accordingly. The MLE estimator in \eqref{eq:MLE} guarantees the following PAC condition based on the Chernoff-Hoeffding bound \cite{hoeffding1963probability}:
\begin{align}
	\Pr\left(\hat{\lambda}^t_n(p^t_n)- \mu_n(p^t_n)>\epsilon\right)\leq \sigma_n(\epsilon,C^t_n(p^t_n)) = e^{-\frac{2C^t_n(p^t_n)\epsilon^2}{(\lambda^{\max})^2}}
\end{align}
and it holds that $\frac{\partial\sigma(\epsilon,C^t_n(p))}{\partial C^t_n(p)}\leq 0$. Now, we can design $h_T$ and $K(T)$ to ensure a sublinear regret of COERR.
\begin{theorem}[Regret upper bound] \label{theo:regret_bound}
	Let $h_T = \lceil T^{\frac{1}{3\alpha+D}}\rceil$ and $K(t)= t^{\frac{2\alpha}{3\alpha+D}}\log(t)$. If the proposed algorithm runs with these parameters, SBSs use MLE for estimation, and the H\"{o}lder condition holds true, then the leading order of the regret $\mathbb{E}[R(T)]$ is:
	\begin{align*}
		O\left(\dfrac{2^DNB\lambda^{\max}d^{\max}}{w^{\min}} T^{\frac{2\alpha+D}{3\alpha+D}}\log(T)\right).
	\end{align*}
\end{theorem}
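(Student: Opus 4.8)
The plan is to specialize the general bound of Theorem~\ref{theo:regret_bound_raw} to the stated schedules $h_T=\lceil T^{1/(3\alpha+D)}\rceil$ and $K(t)=t^{2\alpha/(3\alpha+D)}\log(t)$ together with the MLE tail $\sigma_n(\epsilon,c)=e^{-2c\epsilon^2/(\lambda^{\max})^2}$ derived in Section~\ref{sec:parameter_design}, and then to add up the four resulting terms and retain the dominant one. Recall that Theorem~\ref{theo:regret_bound_raw} writes $\mathbb{E}[R(T)]$ as a sum of: (i) the exploration term $\frac{NB\lambda^{\max}d^{\max}}{w^{\min}}(h_T)^D\lceil K(T)\rceil$; (ii) the estimation-failure term $\frac{2|\mathcal{F}|B\lambda^{\max}d^{\max}}{w^{\min}}\sum_{t=1}^{T}\sum_{n\in\mathcal{N}}\sigma_n\!\big(\tfrac{H(t)}{d^{\max}N},K(t)\big)$; (iii) the discretization-bias term $3d^{\max}NLD^{\alpha/2}h_T^{-\alpha}T$; and (iv) the residual term $AT^{\theta+1}$, where $H(t),A,\theta$ are still free subject to the side condition of that theorem. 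I would bound each of the four, then take the maximum order in $T$.

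Terms (i) and (iii) are direct. Since $T^{1/(3\alpha+D)}\ge 1$, we have $(h_T)^D\le (T^{1/(3\alpha+D)}+1)^D\le 2^D T^{D/(3\alpha+D)}$ and $\lceil K(T)\rceil\le T^{2\alpha/(3\alpha+D)}\log(T)+1$, so term (i) is $O\!\big(\tfrac{2^D NB\lambda^{\max}d^{\max}}{w^{\min}}T^{(2\alpha+D)/(3\alpha+D)}\log(T)\big)$ — exactly the claimed leading expression. For term (iii), $h_T^{-\alpha}\le T^{-\alpha/(3\alpha+D)}$ gives $O\!\big(d^{\max}NLD^{\alpha/2}T^{1-\alpha/(3\alpha+D)}\big)=O\!\big(d^{\max}NLD^{\alpha/2}T^{(2\alpha+D)/(3\alpha+D)}\big)$, i.e.\ the same power of $T$ as (i) but lacking both the $\log(T)$ factor and the $B/w^{\min}$ blow-up, hence subsumed.

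The real work is in (ii) and (iv), which are coupled through $H(t),A,\theta$. Here I would take $H(t)=\kappa\,t^{-\alpha/(3\alpha+D)}$ and $\theta=-\alpha/(3\alpha+D)<0$, with $\kappa,A$ constants. With the chosen $h_T$ the discretization quantity $h_T^{-\alpha}=\Theta(T^{-\alpha/(3\alpha+D)})$ is $\le t^{-\alpha/(3\alpha+D)}$ for every $t\le T$, which is precisely what lets the side condition of Theorem~\ref{theo:regret_bound_raw} be satisfied by a \emph{constant} $A$ of the form $A\asymp\kappa+d^{\max}NLD^{\alpha/2}$; then term (iv) equals $AT^{\theta+1}=AT^{(2\alpha+D)/(3\alpha+D)}$, again of the same $T$-order as (i) and hence subsumed. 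For term (ii) I substitute the MLE tail together with $H(t)^2K(t)=\kappa^2 t^{-2\alpha/(3\alpha+D)}\cdot t^{2\alpha/(3\alpha+D)}\log(t)=\kappa^2\log(t)$, obtaining
\begin{align*}
\sigma_n\!\Big(\tfrac{H(t)}{d^{\max}N},K(t)\Big)=\exp\!\Big(-\tfrac{2\kappa^2\log(t)}{(d^{\max}N\lambda^{\max})^2}\Big)=t^{-2\kappa^2/(d^{\max}N\lambda^{\max})^2},
\end{align*}
so choosing $\kappa$ large enough that $2\kappa^2/(d^{\max}N\lambda^{\max})^2>1$ makes $\sum_{t=1}^{T}\sigma_n(\cdot,\cdot)$ bounded by a constant, whence term (ii) is $O(1)$ in $T$. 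Adding the four pieces, the unique term carrying both the $\log(T)$ and the $B/w^{\min}$ factor is (i), which therefore sets the leading order $O\!\big(\tfrac{2^D NB\lambda^{\max}d^{\max}}{w^{\min}}T^{(2\alpha+D)/(3\alpha+D)}\log(T)\big)$, as claimed.

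The main obstacle is the tension in the choice of $H(t)$: it must decay fast enough that the residual term (iv) and the side condition of Theorem~\ref{theo:regret_bound_raw} stay benign, yet its product $K(t)H(t)^2$ inside the exponent must grow at least like $\log t$ so that $\sum_t\sigma_n$ is summable. The reason this closes is exactly that the MLE enjoys an exponential (sub-Gaussian) tail: when $K(t)H(t)^2$ collapses to a multiple of $\log t$, the per-slot failure probability becomes a negative power of $t$, and the constant $\kappa$ is the single knob that tunes that power above $1$; a merely polynomial estimator tail would instead force a faster-growing $K(t)$ and a worse exponent. A minor but necessary bookkeeping point is that the $2^D$ constant comes from $\lceil T^{1/(3\alpha+D)}\rceil^{D}$, and that $K(t)=t^{2\alpha/(3\alpha+D)}\log(t)$ is monotonically increasing so that it is an admissible control function for COERR.
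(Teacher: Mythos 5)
Your proposal is correct and follows essentially the same route as the paper's own proof: specialize Theorem~\ref{theo:regret_bound_raw}, pick $H(t)\propto t^{-\alpha/(3\alpha+D)}$ and $\theta=-\alpha/(3\alpha+D)$ so the side condition holds with a constant $A$, observe that $K(t)H(t)^2$ collapses to a multiple of $\log t$ making the MLE tail term a convergent $p$-series, and read off the leading order from the exploration term. The only cosmetic difference is that you keep the constant $\kappa$ generic with the summability condition $2\kappa^2/(d^{\max}N\lambda^{\max})^2>1$, whereas the paper fixes $\kappa=N\lambda^{\max}d^{\max}$ so the tail is exactly $t^{-2}$ and the sum is bounded by $\pi^2/3$.
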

\begin{proof}
	See in online Appendix \ref{proof:theo:regret_bound} \cite{onlineappendix}.
\end{proof}
The leading order of regret upper bound given in Theorem \ref{theo:regret_bound} is sublinear. In addition, the regret bound is valid for any $T$ and therefore providing a bound on the performance loss for any time horizon. This also can be used to characterize the convergence speed of COERR. However, we see that the order of upper bound regret can be close to 1 when the dimension of context space $D$ is large. In this case, the learner may need to apply dimension reduction techniques based on empirical experience to cut down the context dimension.

Though the algorithm parameter $h_T$ and the regret upper bound is given based on a known time horizon $T$, COERR can be easily extended to work with unknown time horizon with the assistance of doubling-trick \cite{cesa1997use, tekin2017adaptive}. The key idea of doubling-trick is to partition the time into multiple phases $(j = 1,2,3,...)$ with doubling length ($T_1, T_2, \cdots$), e.g., if the length of phase is $T_1 = T$, then the length of $j$-th phase is $2^{j-1}T$. In each phase, COERR is run from scratch without using any information from the previous phase. A salient property of doubling-trick is that it does not change the order of the upper regret bound.   

\subsection{Complexity and Scalability}
The memory requirement of COERR is mainly determined by the number of counters $C^t(p)$ and experiences $\mathcal{E}^t(p)$ maintained for hypercubes. Since the counter is an integer for each hypercube, its memory requirement is determined by the number of created hypercubes. The experience $\mathcal{E}^t(p)$ is a set of observed service demand records up to time slot $t$ which needs a higher memory requirement. However, storing all historical data is actually unnecessary since most estimators, including MLE in \eqref{eq:MLE}, can be updated in a recursive manner. Therefore, the ASP only needs to keep current service demand estimation for a hypercube which is a floating point number. If COERR is run with the parameters in Theorem \ref{theo:regret_bound}, the number of hypercubes is $(h_{T})^{D} = \lceil T^{\frac{1}{3\alpha + D}}\rceil^{D}$. Hence, the required memory is sublinear in the time horizon $T$. This means that when $T \to \infty$, COERR would require infinite memory. Fortunately, in the practical implementations, ASP only needs to keep the counters and experiences of hypercubes which at least one of the observed contexts belongs to. Therefore, the number of counters and experiences to keep is actually much smaller than the analytical requirement. 

\section{Extension: Solutions for Subproblems}\label{sec:ext}
\subsection{Exact and Approximate Solutions for Sub-problems}
In this section, we discuss in detail the solutions for optimization problems in \eqref{opt:subproblem}, \eqref{opt:semi_explore}, and \eqref{opt:exploit}. Since these optimization problems have the same form, we take the Oracle subproblem \eqref{opt:subproblem} as an example. Note that the problem is solved for each time slot $t$, the time index is dropped in this section for ease of notation. The subproblem is a combinatorial optimization which can be formulated as a Knapsack problem \cite{martello1990knapsack}. The Knapsack problem is a classic combinatorial optimization: given a set of items, each with a weight and value, determine the items to include in a collection such that the total weight is less than or equal to a given limit and the total value is as large as possible. In ERR subproblems, each rental decision at a SBS is an item in the Knapsack problem: for an ``item'' $f_n$, its ``weight'' is the rental cost $w_n(f_n)$ and its ``value'' is the utility gain $u_n(f_n,\mu_n(x))$ ($x$ is the context of SBS $n$ in a certain time slot), and the limit is ASP budget $B$. However, the standard formulation of Knapsack problem cannot exactly capture the ERR problem since the ASP can only take one rental decision for one SBS, which means items associated to one SBS cannot be included at the same time. Such an extension of standard Knapsack problem with addition conflict restrictions, stating that from a certain set of items at most one item can be selected, is known as the \emph{Knapsack problem with conflict graph} (KCG). In the following, we formulate the subproblem as KCG problem and discuss its solutions.

These conflict constraints is represented by a undirected graph $G = (V,E)$.
\begin{itemize}
	\item $V$ (Vertices): each rental decision at a SBS $f\in F_n, \forall n$ corresponds to a vertex in the undirected graph $G$.
	\item $E$ (Edges): for an arbitrary pair of vertices $f,f^\prime \in V$, add an edge $e(f,f^\prime)$ between $f$ and $f^\prime$ if $f,f^\prime \in F_n$ are rental decisions for a same SBS.
\end{itemize}
The vertices/items in $f_k\in V$ are indexed by $k = 1,2,\dots,K$ and for a vertex $f_k\in F_n$, we define its weight as $b_k = w_n(f_k)$ and its value as $z_k = \max\{\mu_n(x),\lambda^{\max}\} \Delta_n(f_k)$. In addition, we introduce an indicator $y_k\in\{0,1\}$ for each vertex $f_k$ indicating whether item $f_k$ is taken ($y_k =1$) or not ($y_k =0$). Then, the KCG for subproblem can be written as:
\begin{subequations}
	\begin{align}
		\max ~& \sum\nolimits_{k=1}^K z_ky_k\\
		\text{s.t.}~ & \sum\nolimits_{k=1}^K b_ky_k \leq B \\
		& y_k + y_j \leq 1, ~\forall ( y_k, y_j) \in E, ~ k,j \in \{1,2,\dots,K\}\\
		& y_k \in \{0,1\}, ~k =1,2,\dots,K
	\end{align}
\end{subequations}
KCG is a well-investigated problem. Several existing algorithms, e.g., Branch-and-Bound \cite{bettinelli2017branch}, can be directly used to derive an exact solution for KCG problem. If an exact solution for each KCG/subproblem is obtained. Then, COERR can provide the expected performance as analyzed in the previous section. However, these exact algorithms can be computational-expensive when the number of items is large and therefore their runtime may become a bottleneck in certain applications (though the runtime is less likely to be an issue in our ERR problem since the time scale of the considered problem is relatively large, e.g., several hours). To facilitate the solution of KCG, approximation algorithms are studied to efficiently derive approximate solutions in polynomial runtime. Next, we will discuss the performance of the proposed algorithm when approximate solutions are derived for subproblems.

\subsection{Performance Analysis with Approximate Solutions}
We assume that the approximation algorithm guarantees a performance bound ($\delta$-approximation) compared to the optimal solution as define below:   
\begin{definition}[$\delta$-approximation]\label{def:delta_appro}
	An approximation algorithm is a $\delta$-approximation if the objective value $U^\delta$ achieved by the approximate solution $\f^\delta$ satisfies $\delta U^\delta \geq U^*, \delta>1$ where $U^*$ is the optimal object value achieve by a optimal solution $\f^*$.
\end{definition} 
Definition \ref{def:delta_appro} indicates that a $\delta$-approximation algorithm achieves no less than $\frac{1}{\delta}$ of the optimum. Many existing approximation algorithms can be directly applied, e.g., Fully Polynomial Time Approximation Schemes (FPTAS) \cite{pferschy2017approximation}, to solve the KCG problems. The assumption of $\delta$-approximation prevents the approximate solution from being arbitrarily bad and enables the performance analysis for COERR.

Now we are ready to analyze the performance of proposed algorithm with approximate solution. From Theorem \ref{theo:regret_bound}, we see that the leading order of the regret upper bound is mainly determined by the exploration regret $\mathbb{E}[R_\text{explore}(T)]$. A sublinear upper bound of exploration regret is derived by limiting a sublinear number of time slots that COERR enters the exploration phase with properly designed $h_T$ and $K(t)$. Note that COERR is either in exploration or exploitation, a sublinear number of exploration slots indicates that the number of exploitation slots is non-sublinear. In this case, it is difficult, if not impossible, to guarantee a sublinear regret with approximate solutions even if we have perfect estimation in each exploitation: due to the $\delta$-approximate, the worst performance loss of approximate solution with perfect estimation in one time slot is $\frac{\delta-1}{\delta}U^{*,t}$. Let $T_\text{exploit}$ be number of exploitation slots which is non-sublinear, the upper bound of exploitation regret (with approximate solutions) must be larger than $\frac{\delta-1}{\delta}T_\text{exploit}\mathbb{E}[U^{*,t}]$ which is also non-sublinear. To address this problem, we slightly change the definition of regret by defining the $\delta$-regret below:
\begin{align}\label{eq:regret_appro}
R^{\delta}(T) = \sum\nolimits_{t=1}^T \left(\frac{1}{\delta}U^t(\f^{*,t};\bm{\lambda}^t)- U^t(\f^{\delta,t};\bm{\lambda}^t)\right)
\end{align}
The rental decision $\f^{*,t}$ is still the optimal Oracle solution for subproblems in \eqref{opt:subproblem}. The rental decision $\f^{\delta,t}$ is the online decisions made by the proposed algorithm with approximation algorithm, i.e., solutions to the optimization problem in \eqref{opt:semi_explore} during exploration and the optimization problem in \eqref{opt:exploit} is approximated by a $\delta$-approximation algorithm. In \eqref{eq:regret_appro}, the online decisions derived by COERR with $\delta$-approximation algorithm is actually compared by the lower bound of approximated Oracle solution (i.e., Oracle also use a $\delta$-approximation algorithm to solve the subproblem in \eqref{opt:subproblem}. Such a definition of regret is often used in MAB framework where optimal solution cannot be derived in each round \cite{chen2018contextual}. 
\begin{theorem}[$\delta$-regret upper bound] \label{theo:delat_regret_bound}
	If the proposed algorithm is run with parameters and conditions given in Theorem \ref{theo:regret_bound} and a $\delta$-approximation is applied for optimization, then the leading order of $\delta$-regret $\mathbb{E}[R^{\delta}(T)]$ is:
	\begin{align*}
	O\left(\dfrac{2^DNB\lambda^{\max}d^{\max}}{\delta w^{\min}} T^{\frac{2\alpha+D}{3\alpha+D}}\log(T)\right).
	\end{align*}
\end{theorem}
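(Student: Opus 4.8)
The plan is to decompose the $\delta$-regret in the same way the ordinary regret was decomposed, namely $R^{\delta}(T) = R^{\delta}_\text{explore}(T) + R^{\delta}_\text{exploit}(T)$, and then bound each piece. For the exploration part I would argue that it is essentially unchanged: whether or not an approximation algorithm is used in the exploration phase (the problem \eqref{opt:semi_explore}), every exploration slot still contributes at most $\lambda^{\max}d^{\max}B/w^{\min}$ to the regret, and the number of exploration slots is still at most $N(h_T)^D\lceil K(T)\rceil$ because the under-explored set $\mathcal{U}^t$ and the counters $C^t_n(p)$ evolve exactly as in Algorithm \ref{alg:COERR}. Since comparing against $\tfrac1\delta U^t(\f^{*,t};\bm\lambda^t)$ rather than $U^t(\f^{*,t};\bm\lambda^t)$ only decreases the benchmark, the bound of Lemma \ref{lemma:R_explore} still applies verbatim to $R^{\delta}_\text{explore}(T)$, and with the choice $h_T = \lceil T^{1/(3\alpha+D)}\rceil$, $K(t) = t^{2\alpha/(3\alpha+D)}\log(t)$ this term is $O\!\left(\tfrac{2^D NB\lambda^{\max}d^{\max}}{w^{\min}} T^{(2\alpha+D)/(3\alpha+D)}\log(T)\right)$. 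Comparing with the claimed bound, note that the exploration term in Theorem \ref{theo:regret_bound} carries no $1/\delta$; so to land on the stated leading order one should in fact use the sharper observation that in an exploration slot $U^t(\f^{\delta,t};\bm\lambda^t)\ge 0$, so the per-slot $\delta$-regret is at most $\tfrac1\delta U^t(\f^{*,t};\bm\lambda^t)\le \tfrac1\delta \lambda^{\max}d^{\max}B/w^{\min}$, which does produce the $1/\delta$ factor in the leading order.

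For the exploitation part I would mirror the proof of Lemma \ref{lemma:R_exploit}. Fix an exploitation slot $t$ (so $\mathcal{U}^t=\emptyset$, meaning every relevant hypercube has $C^t_n(p^t_n)\ge K(t)$). The key identity is a three-way split of the per-slot $\delta$-regret $\tfrac1\delta U^t(\f^{*,t};\bm\lambda^t) - U^t(\f^{\delta,t};\bm\lambda^t)$ into: (i) the gap between the true utility and the estimated utility at the Oracle decision, (ii) the $\delta$-approximation gap between $\tfrac1\delta U^t(\f^{*,t};\hat{\bm\lambda}^t)$ and $U^t(\f^{\delta,t};\hat{\bm\lambda}^t)$, which is $\le 0$ by Definition \ref{def:delta_appro} because $\f^{\delta,t}$ is a $\delta$-approximate maximizer of $U^t(\cdot;\hat{\bm\lambda}^t)$ and $\f^{*,t}$ is feasible, and (iii) the gap between the estimated utility and the true utility at the decision $\f^{\delta,t}$. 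Terms (i) and (iii) are controlled exactly as in Lemma \ref{lemma:R_exploit}: they split further into an estimation error piece (good event via the PAC property $\sigma_n$, bad event via the crude bound $\lambda^{\max}d^{\max}B/w^{\min}$ times $|\mathcal F|$ arms) and a discretization error piece bounded by the H\"older condition as $d^{\max}NLD^{\alpha/2}h_T^{-\alpha}$. Summing over $t$ and plugging in the MLE bound $\sigma_n(\epsilon,C) = e^{-2C\epsilon^2/(\lambda^{\max})^2}$ together with $h_T$ and $K(t)$ exactly as in the proof of Theorem \ref{theo:regret_bound}, the exploitation $\delta$-regret is of strictly smaller order than the exploration term — in particular $o\!\left(T^{(2\alpha+D)/(3\alpha+D)}\log T\right)$ — so it does not affect the leading order.

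Adding the two bounds, the leading order is dominated by the exploration contribution, giving $\mathbb{E}[R^{\delta}(T)] = O\!\left(\tfrac{2^D NB\lambda^{\max}d^{\max}}{\delta w^{\min}} T^{(2\alpha+D)/(3\alpha+D)}\log(T)\right)$, as claimed. The main obstacle I anticipate is step (ii): one must be careful that the notion of ``$\delta$-approximate maximizer'' is taken with respect to the \emph{estimated} objective $U^t(\cdot;\hat{\bm\lambda}^t)$ (the only objective the algorithm can actually optimize), and then argue that this still yields a useful inequality against the \emph{true}-objective Oracle value $\tfrac1\delta U^t(\f^{*,t};\bm\lambda^t)$ — which is precisely why the benchmark in \eqref{eq:regret_appro} had to be weakened to $\tfrac1\delta U^t(\f^{*,t};\bm\lambda^t)$ and why the telescoping through the estimated objective (terms (i)--(iii)) is the right device. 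The remaining bookkeeping — verifying that the extra additive $AT^{\theta+1}$ term and the constants behave as before — is routine and identical to the computations already carried out for Theorem \ref{theo:regret_bound}.
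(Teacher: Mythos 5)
Your proposal is correct and follows essentially the same route as the paper: the same split $R^{\delta}(T)=R^{\delta}_{\text{explore}}(T)+R^{\delta}_{\text{exploit}}(T)$, the same source of the $1/\delta$ factor in the exploration term (per-slot $\delta$-regret is at most $\tfrac1\delta U^t(\f^{*,t};\bm{\lambda}^t)\le \tfrac1\delta\lambda^{\max}d^{\max}B/w^{\min}$ since the achieved utility is nonnegative), and the same conclusion that the exploitation contribution is of lower order under the parameter choices of Theorem~\ref{theo:regret_bound}. The one place you organize things differently is the exploitation bound: the paper redefines the suboptimal arm set $\mathcal{L}^{\delta}(\p^t)$ relative to a $\delta$-approximate solution $\dot{\f}^{\delta}(\p^t)$ of the center-context problem and reruns the suboptimal/near-optimal machinery of Lemmas~\ref{lemma:E_Rs} and \ref{lemma:E_Rn} (inserting $\tfrac1\delta U^t(\dot{\f}(\p^t);\dot{\bm{\mu}})\le U^t(\dot{\f}^{\delta}(\p^t);\dot{\bm{\mu}})$ into the H\"{o}lder chain), whereas you telescope directly through the estimated objective and discharge the approximation gap via $\tfrac1\delta U^t(\f^{*,t};\hat{\bm{\lambda}}^t)\le U^t(\f^{\delta,t};\hat{\bm{\lambda}}^t)$. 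Your version is arguably cleaner in that it can dispense with the $At^{\theta}$ threshold, but it relies on the same PAC-plus-H\"{o}lder control of the two remaining gaps, so the two arguments are interchangeable and yield the same leading order.
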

\begin{proof}
	See in online Appendix \ref{proof:theo:delat_regret_bound} \cite{onlineappendix}.
\end{proof}
Theorem \ref{theo:delat_regret_bound} indicates that our algorithm is able to work well even if the subproblem in each time slot can only be approximately solved and a sublinear $\delta$-regret can be achieved based on the performance guarantee of $\delta$-approximation algorithms.
\section{Experiments}\label{sec:experiment}
In this section, we carry out systematic experiments in a real-world dataset to verify the efficacy of the proposed algorithm.  
\subsection{Experiment Setup}
We use the real-word service demand trace collected by the Grid Workloads Archive (GWA) \cite{iosup2008grid}. The GWA datasets record the task requests received by large-scale multi-site infrastructures (girds) that provide computational support for e-Science. The experiment is mainly run on the GWA dataset, AuverGrid, which collects around 400,000 task requests of 5 grids. To fit the AuverGird data in our ERR context, we assume each grid corresponds to an SBS in the edge network. In some parts of the experiments, we combine other GWA datasets with AuverGrid to increase the number of sites and show the impact of SBS numbers on the algorithm performance. Each task request record has a \emph{``SubmitTime''} (in second) that indicate the time of task arrival and a \emph{``RunSiteID''} that indicates the site for task execution. The rental decision cycle is set as 3 hrs. With this information, we are able to analyze the service demand trace at each SBS. Fig.\ref{fig:real_trace} depicts the service demand trace of three SBSs. It can be observed that the demand patterns are different at different SBSs and hence it is necessary to learn the service demand pattern for each SBS. 
\begin{figure}[htb]
	\centering	
	\subfigure[Service demand trace.]{\label{fig:real_trace}
		\includegraphics[width=0.45\linewidth]{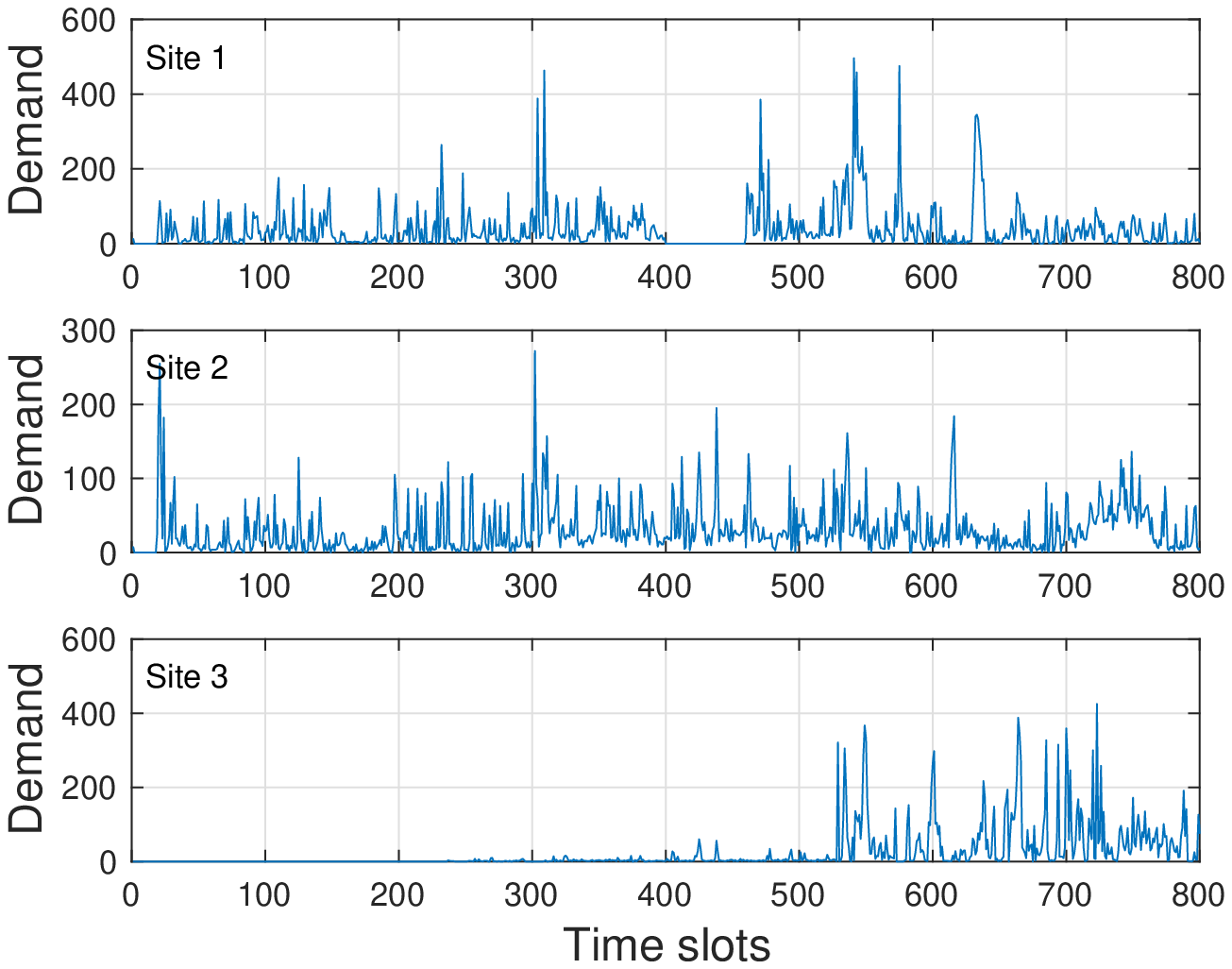}}
	\subfigure[Context-aware service demand.]{\label{fig:context_demand}
		\includegraphics[width=0.5\linewidth]{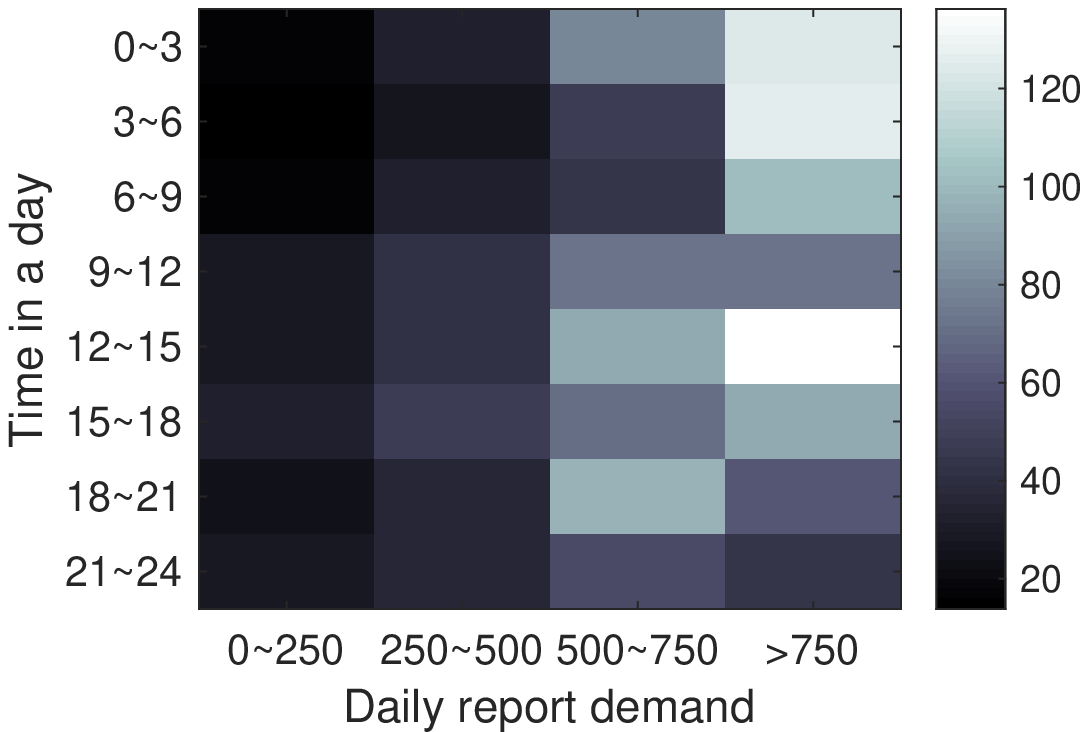}}
	\caption{Real-world service demand. Fig.\ref{fig:real_trace} shows the service demand of three sites in AuverGrid: Site 1: clrlcgce01, Site2: clrlcgce02, Site2: clrlcgce03. Fig.\ref{fig:context_demand} shows the expected service demand for each hypercube maintained by Site 1 (notice that the partition is created as an example  and may be different from the partition designed by the proposed algorithm).}
	\label{fig:real_data}
\end{figure}
The context space of SBSs has two dimension: \emph{``time in a day''} and \emph{``daily report demand''}. The context \emph{``time in a day''} indicates the time when a rental decision is made, and the context \emph{``daily report demand''} is the total service demand received by a SBS in the previous day which is provided by the site daily report. Fig.\ref{fig:context_demand} shows the expected service demand of hypercubes in the context partition of Site 1. We see that the service demand is closely related to the considered contexts. The optimization problems in \eqref{opt:semi_explore}, and \eqref{opt:exploit} are transformed into KCG and solved using Brunch-and-Bound algorithm \cite{bettinelli2017branch}. The computing resource at edge server is discretized as Virtual Machines (VMs) and the rental decision is the number of VMs to rent at SBSs: $\f^t_n \in F_n = \{0,2,4,6\}, \forall n$. The processor capacity of each VM is 2GHz. Therefore, if the rental decision $f^t_n = 2$ then the rented processor capacity at SBS $n$ is 4GHz. Other important parameters are given in Table \ref{tab:exp_parameter}. 
\begin{table}[htb]
	\centering
	\caption{Algorithm complexity}\label{tab:exp_parameter}
	\begin{tabular}{l|l} 
		\hline
		\textbf{Parameter} &  \textbf{Value} \\
		\hline
		Input data size of one task, $s$ & 1MB\\
		Required CPU cycles for one task, $c$ & $10^9$\\
		Pricing mapping function, $w_n(f_n)$ & $w_n(f_n) = 1\cdot f_n$\\
		Maximum service demand processed at SBSs, $\lambda^{\max}(f_n)$ & $\lambda^{\max}(f_n) = 150\cdot f_n$\\
        Path-loss with random shadowing & $P_L = 20\log(d[\text{km}]) + 28 + N(0,5^2)$\\
        Expected wireless transmission rate of SBSs & 5Mbps\\
        Expected wireless transmission rate of MBS & 2Mbps\\
        Bandwidth, $W$ & Spectrum Allocation Scheme \cite{chandrasekhar2009spectrum}\\
        Dimension of context space, $D$ & 2\\
        $\alpha$ in H\"{o}lder & 1 \\
        Time horizon $T$ & 2700\\
        $h_T$ in COERR & 5\\
		\hline
	\end{tabular}\\
\end{table}

The proposed algorithm COERR is compared with following benchmarks:\\
1) Oracle: the Oracle algorithm knows precisely the expected service demand of SBS with any observed context. In each time slot, Oracle chooses rental decisions at SBSs to maximize the ASP utility as in \eqref{opt:subproblem} based on the expected service demand of observed context.\\
2) Combinatorial UCB (CUCB) \cite{chen2013combinatorial}: CUCB is developed based on a classic MAB algorithm, UCB1. The key idea of CUCB create combinations of rental decisions at all SBSs to enumerate all ASP's rental decision $f$. CUCB runs in the UCB1 framework with feasible ASP rental decisions $\f$ that satisfies $\sum_{n \in \mathcal{N}} w_n(f_n) \leq B$ and learns the expected utility for feasible $\f$ overtime.\\
3) LinUCB  \cite{chu2011contextual}: LinUCB considers SBSs' context when running CUCB. LinUCB also learns the expected utility for feasible rental decision $\f$, but LinUCB now observes the context of SBSs and assume the expected utilities of rental decisions linearly depend on the SBSs' context.\\
4) COERR-ORX: COERR-ORX (Zero or X) is a variant of the proposed algorithm COERR. In COERR-ORX, ASP only chooses where to rent computation resource and does not decide how much to rent, i.e., if ASP chooses to rent computation at SBS $n$, it can only take one rental decision  $f_n = X$. Such edge resource rental problem has been considered in \cite{chen2018spatio}\\
5) Random: The algorithm simply chooses one feasible ASP rental decision in each time slot.

\subsection{Results and Discussions}
\begin{figure}[htb]
	\centering	
	\subfigure[Cumulative utility.]{\label{fig:cum_utility}
		\includegraphics[width=0.48\linewidth]{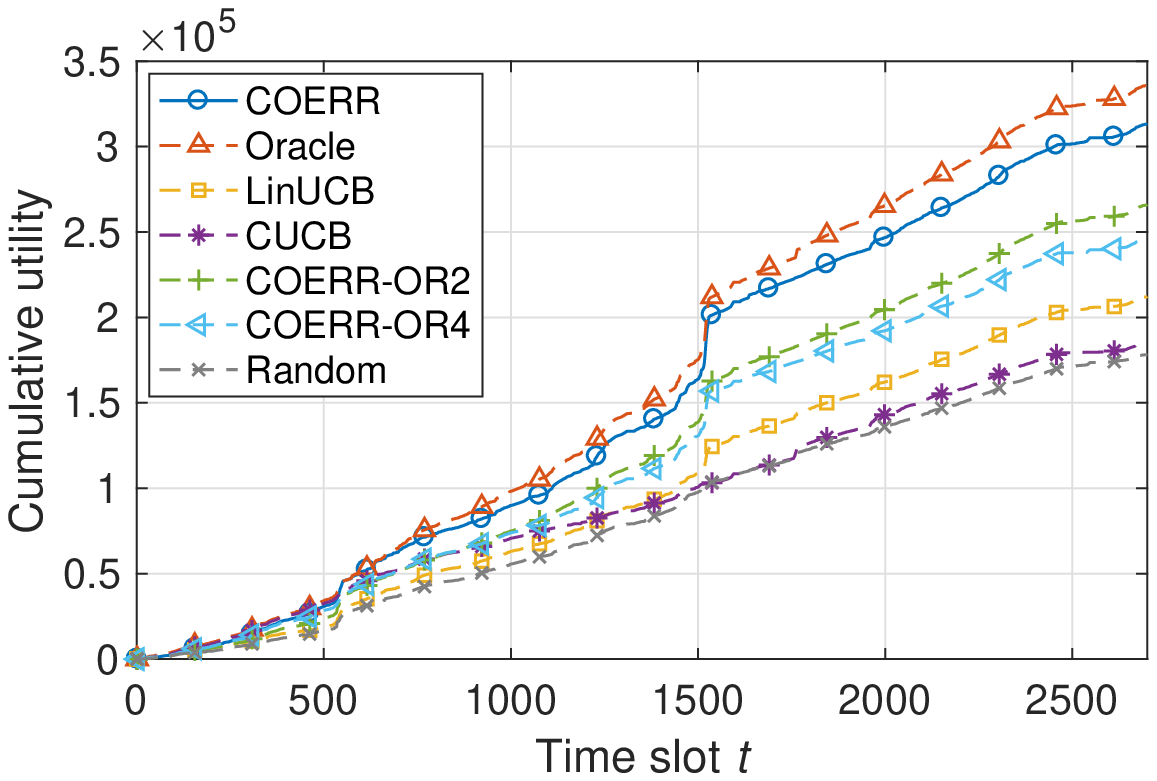}}
	\subfigure[Regret.]{\label{fig:regert}
		\includegraphics[width=0.48\linewidth]{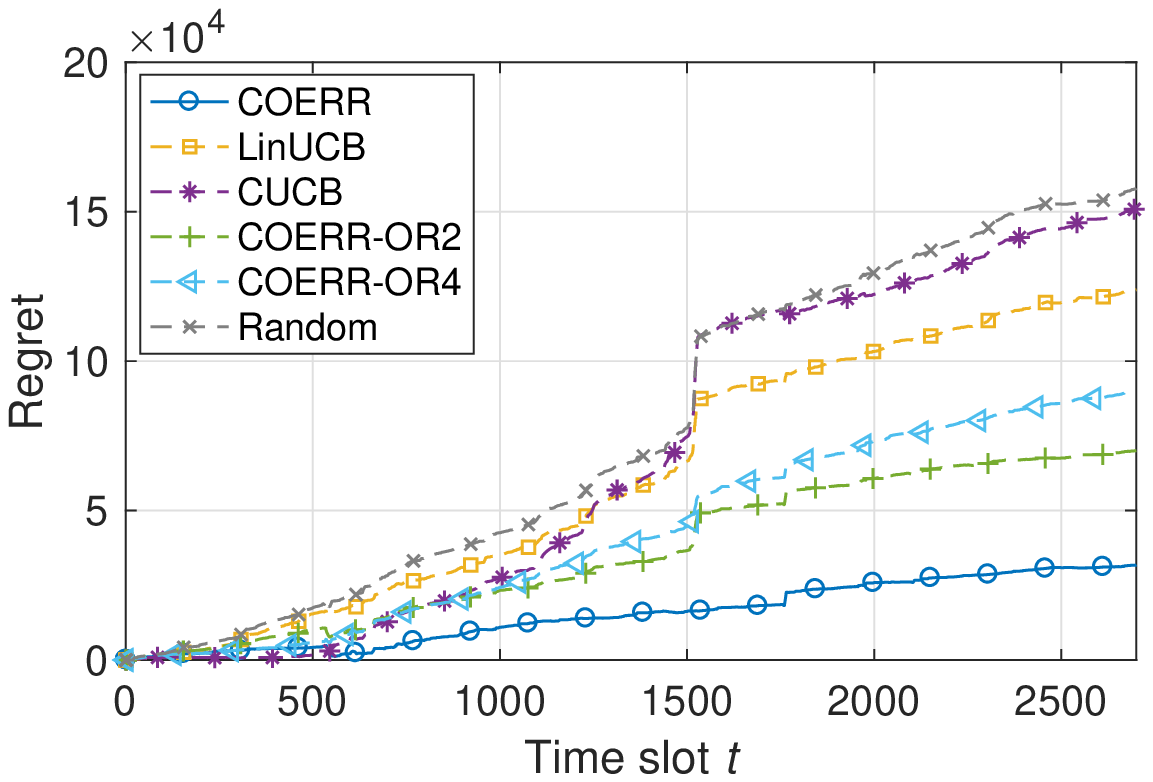}}
	\caption{Comparison on cumulative utilities.}
	\label{fig:runtime}
	\vspace{-0.15 in}
\end{figure}
\subsubsection{Comparison on Cumulative Utilities}
Fig.\ref{fig:runtime} shows the cumulative utilities and rewards achieved by COERR and the other 6 benchmarks during 2,700 time slots. For the cumulative utility in Fig.\ref{fig:cum_utility}, we see that Oracle, as expected, achieves the highest cumulative utility which gives an upper bound to the other algorithms. Among the others, COERR significantly outperforms the other benchmarks and achieves a close-to-Oracle cumulative utility. The benefit of considering the context of SBSs can be appreciated by comparing the performance of context-aware algorithms (COERR, LinUCB and, COERR-ORX) and context-unaware algorithms (CUCB and Random). In addition, we see that the cumulative utility of CUCB is almost the same as the random algorithm. The malfunction of CUCB is due to two reasons:(i) a CUCB arm is a combination of rental decisions at all SBSs and hence the CUCB arm set can be very large. This means CUCB can be easily stuck in the exploration. (ii) CUCB fails to capture the connection between context and service demand. Further analyzing the cumulative utility achieved by LinUCB, we know that considering the context for each possible CUCB arm is not effective to produce a good result due to the large arm set. Comparing the performances of COERR-OR2, COERR-OR4, and COERR, we see that offering more rental decision options at SBSs helps the ASP efficiently utilize its budget and results in a higher cumulative utility. 

Fig.\ref{fig:regert} explicitly depicts the regret incurred by the 6 algorithms. It clearly shows that the proposed algorithm incurs only a sublinear regret (the discontinuity point around slot 1750 is due to the service demand burst at certain sites). 

\subsubsection{Impact of Budget}
Fig. \ref{fig:budget} shows the cumulative utilities achieved by Oracle, COERR, LinUCB, and Random in a total of 2700 time slots. It can be observed that COERR achieves higher cumulative utility compared to LinUCB and Random. In addition, the cumulative utilities achieved by all four algorithms grow with the increase in ASP budget. The reason is intuitive: a larger budget allows the ASP to rent more resource at more SBSs, which means more users can access the edge service and enjoy the low service delay. It is worth noticing that the regrets incurred by COERR, LinUCB, and Random decrease as the budget increase. This is because the ASP can simply place application service at all SBSs without judicious decisions. Though the budget distribution among the SBS may not be optimal, it can avoid large utility loss by using the cloud server.

\begin{figure}
	\centering
	\begin{minipage}{0.48\textwidth}
		\centering
		\includegraphics[width=0.93\linewidth]{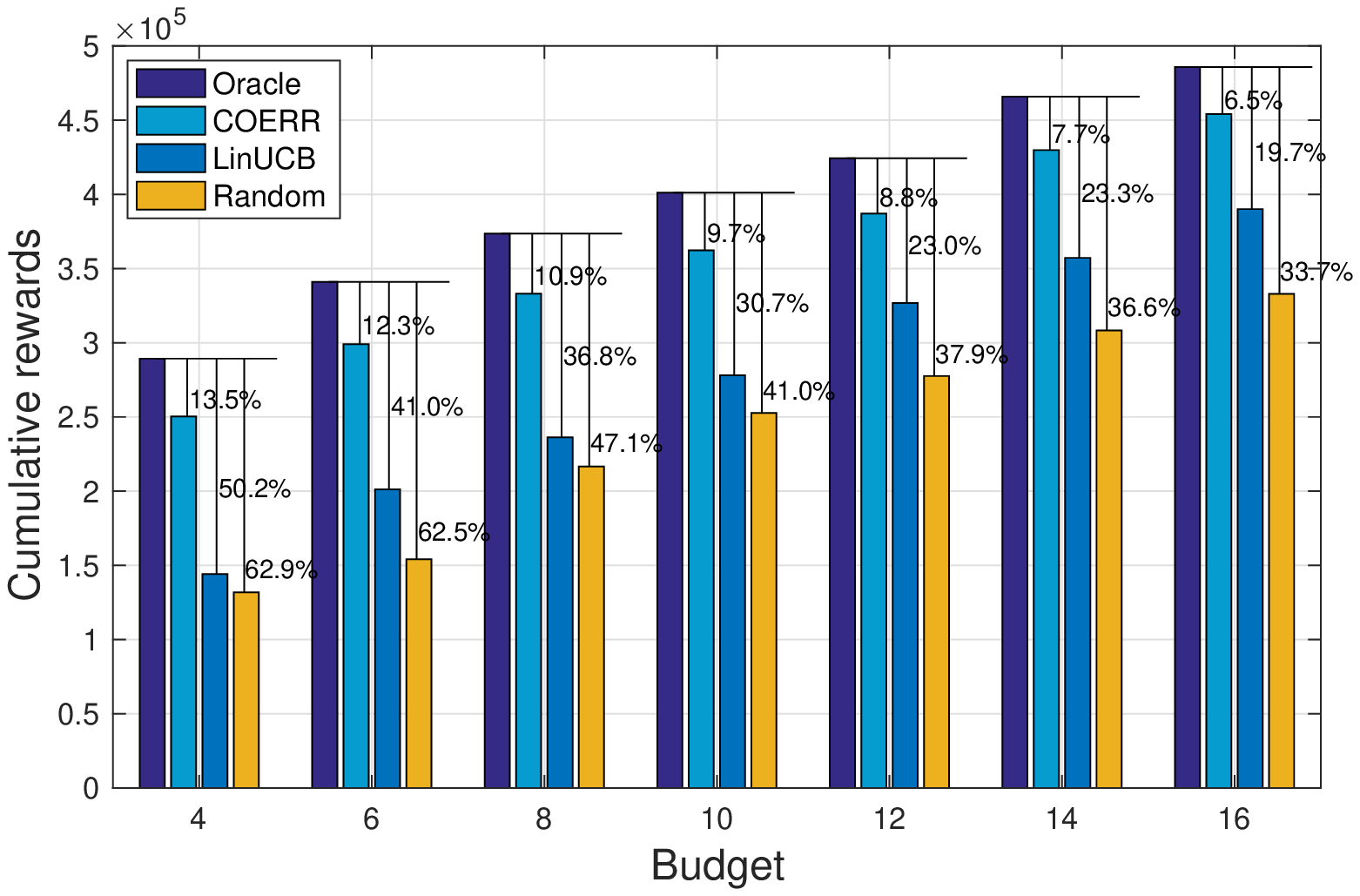}
		\vspace{-0.2 in}
		\caption{Impact of ASP budget.}
		\label{fig:budget}
	\end{minipage}
	\begin{minipage}{0.48\textwidth}
		\centering
			\includegraphics[width=1\linewidth]{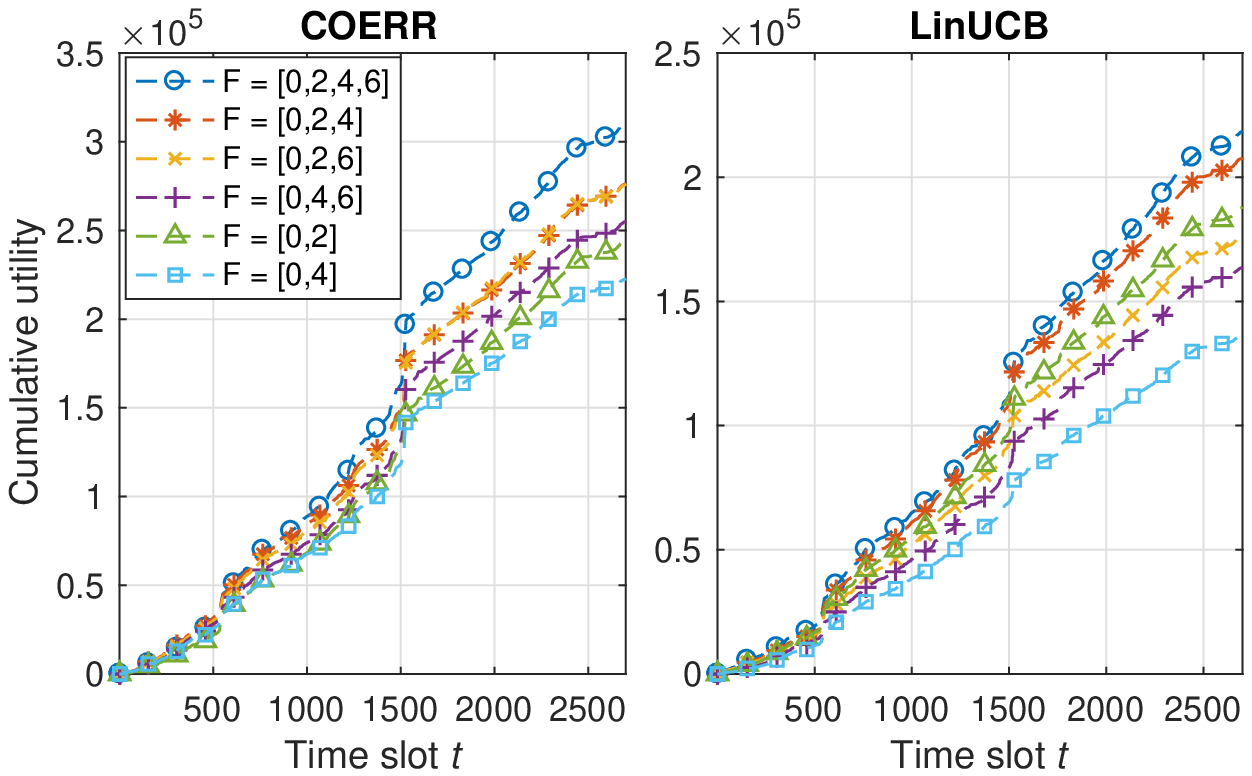}
			\vspace{-0.4 in}
			\caption{Impact of rental decision set.}
			\label{fig:vary_F}
	\end{minipage}
\vspace{-0.2 in}
\end{figure}

\subsubsection{Impact of Rental Decision Set}
Fig.\ref{fig:vary_F} shows the cumulative utility of COERR and LinUCB under different rent decision sets $F$. By comparing the cumulative utilities under rental decision sets with the same size, e.g., $F_n = [0,2,4]$, $F_n = [0,2,6]$, and $F_n = [0,4,6]$, we see that both COERR and LinUCB can achieve higher cumulative utility with smaller $f^{\min}_n$. This is because the ASP can allocation its budget more flexible among SBSs with smaller $f^{\min}_n$. Also, COERR can explore more under-explored arms in one exploration phase with a smaller $f^{\min}_n$, which improves the efficiency of exploration and reduce the regret. 

In addition, we see that COERR achieves a higher cumulative utility when the size of rental decision set $|F|$ is larger. A larger rental decision set $F$ loosens the constraint in per-slot problem \eqref{opt:subproblem}, e.g., the constraint in \eqref{cstr:sub_c2} becomes looser if we change the rental decision set $F_n = [0,2,4]$ to $F_n = [0,2,4,6]$. Therefore, we may have a higher utility in each exploitation phase with $F_n = [0,2,4,6]$. By contrast, a larger $F$ is not always better for LinUCB, e.g., the cumulative utility of $F_n = [0,2]$ is larger than that of $F_n = [0,2,6]$. This is because LinUCB creates more arm with larger $F$, which tends to incur higher regret.

\subsubsection{Running More SBSs} We also vary the number of SBSs in the considered edge system. Since the AuverGrid dataset only records the task request received by 5 distributed sites, we merge it with another GWA dataset, SHARCNET, to get real-world service demand traces for more sites. The merged dataset is used to generate service demand traces for a total of 10 SBSs. The performances of COERR and other benchmarks on the merged dataset are shown in Fig.\ref{fig:runtime_10} where Fig.\ref{fig:cum_utility_BS10} depicts the cumulative utility during runtime and Fig.\ref{fig:regert_BS10} depicts regret. The general trend of cumulative utility in Fig.\ref{fig:cum_utility_BS10} is similar to that in Fig.\ref{fig:cum_utility} and it is can be clearly observed that COERR achieves a sublinear regret. 
\begin{figure}[htb]
	\centering	
	\subfigure[Cumulative utility.]{\label{fig:cum_utility_BS10}
		\includegraphics[width=0.48\linewidth]{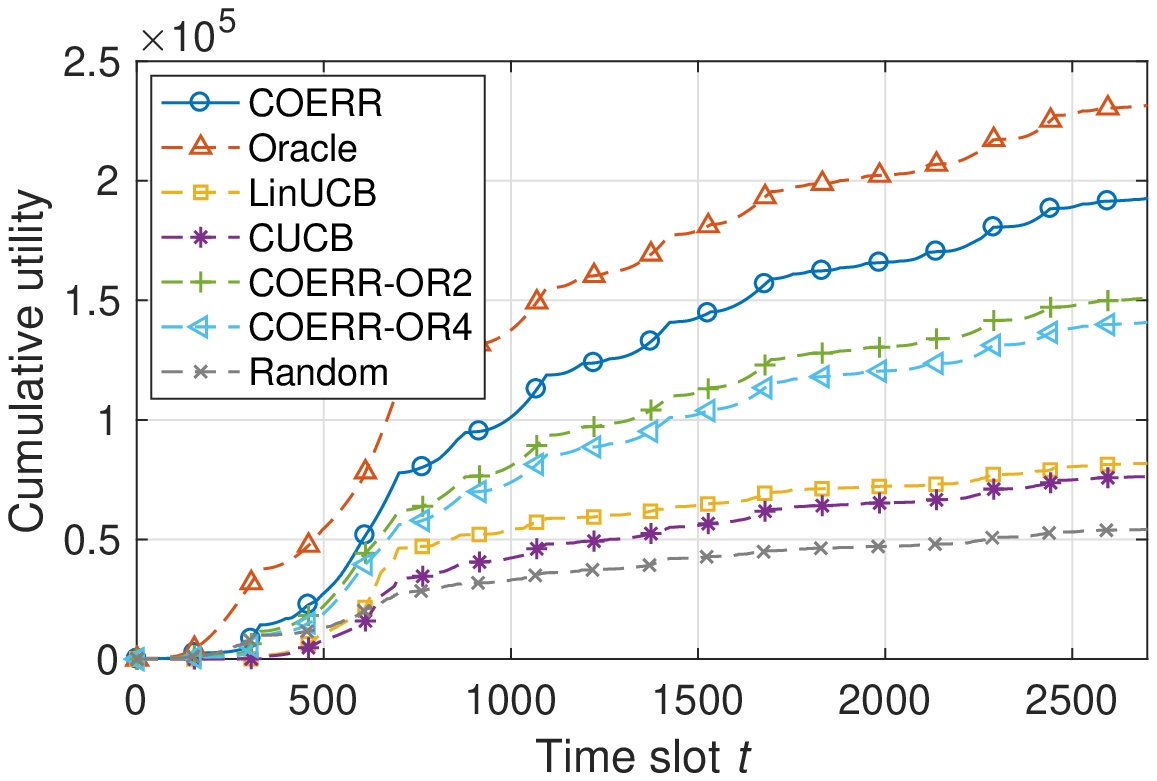}}
	\subfigure[Regret.]{\label{fig:regert_BS10}
		\includegraphics[width=0.48\linewidth]{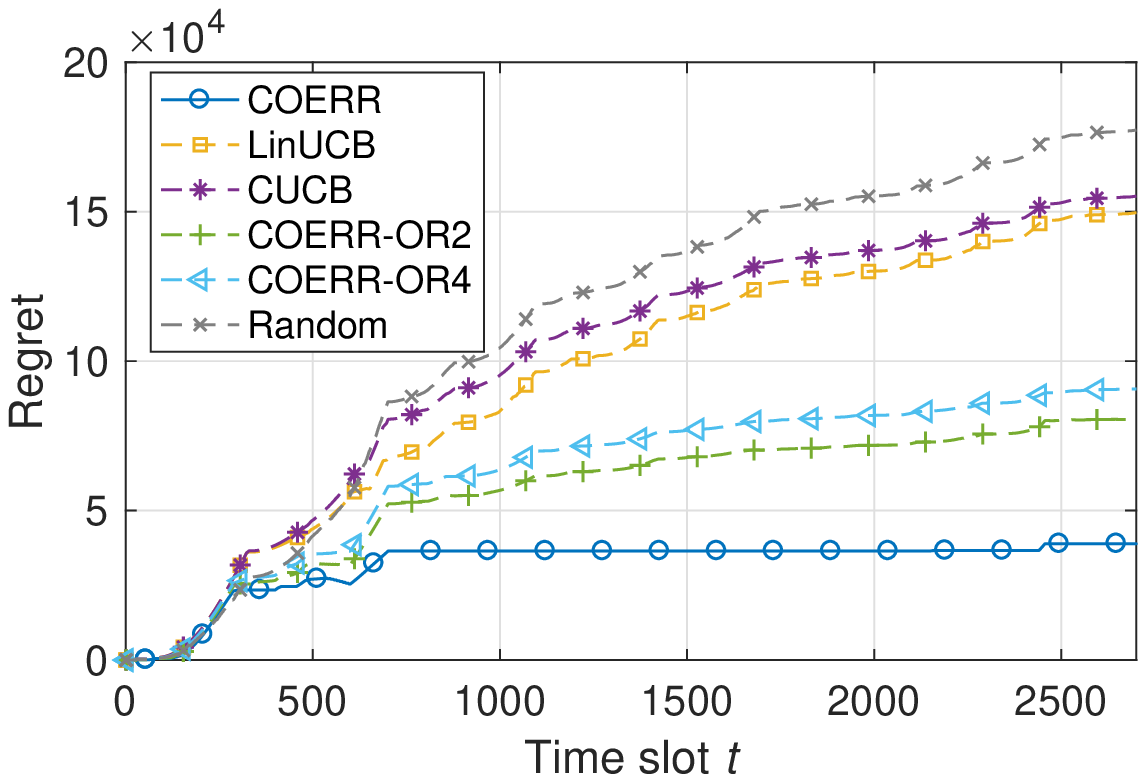}}
	\caption{Runtime performance with 10 SBSs.}
	\label{fig:runtime_10}
\end{figure}

Comparing Fig.\ref{fig:cum_utility_BS10} to Fig.\ref{fig:cum_utility}, we see that COERR incurs a larger regret when running with 10 SBSs. To further analyze the impact of SBS number on the regret, we show the cumulative reward achieved by Oracle, COERR, and CUCB in 2,700 time slots in Fig.\ref{fig:vary_N} and calculate their regrets. There is a general trend that both COERR and CUCB incur a larger regret when there are more SBSs in the edge system. This is because the number of hypercubes created by COERR  and the number of ASP rental decisions created by CUCB become larger when there are more SBSs, which means COERR and CUCB need to spend more time slots in exploration and hence tend to incur larger regret. In addition, we see that the regret of COERR grows slower with the increase in SBS number compared to that of CUCB. The reason for this is that the number of hypercubes for COERR to explore is a linear function of $N$ whereas the number of ASP rental decisions for CUCB is an exponential function of $N$. Table \ref{tab:alg_complexity} shows the number of hypercubes and the number of ASP rental decisions for three experiment setting. Therefore, COERR has better scalability for edge system compared to CUCB. 

\begin{figure}[htb]
		\centering
		\includegraphics[width=0.45\linewidth]{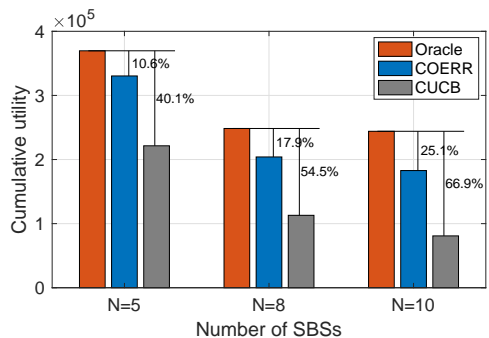}
		\caption{Impact of SBS number on cumulative utility.}
		\label{fig:vary_N}
\end{figure}

\begin{table}[htb]
	\centering
	\caption{Algorithm complexity}\label{tab:alg_complexity}
	\begin{tabular}{l|ccc} 
		\hline
		\textbf{\# SBSs ($N$)} & $N=5$ & $N=8$ & $N=10$ \\
		\hline
		\textbf{\# hypercubes for COERR ($N\cdot(h_{T})^{D}$)} & $125$ &  $200$ & $250$   \\
		\textbf{\# arms for CUCB}  & 121  & 487 & 991\\
		\hline
	\end{tabular}\\
\end{table}

\section{Conclusion}\label{sec:conclusion}
In this paper, we investigated the edge resource rental problem to facilitate the edge service provisioning in a shared edge system. An online decision-making policy, called Context-aware Online Edge Resource Rental (COERR), is designed for ASP to make appropriate edge resource rental decisions while learning the service demand pattern for each individual edge sites. COERR is developed based on the framework of contextual combinatorial multi-armed bandit, where ASP observes the context of SBSs and learns context-aware service demand to guide the resource at multiple edge sites. The proposed algorithm is easy to implement and guarantees provable asymptotically optimal performance.  However, there are still efforts left to be done to improve COERR. First, we currently use a static partition of context space. Considering dynamic partition may help improve the algorithm performance since it generates more appropriate hypercubes for learners in each time slot. Second, we currently only consider the edge resource rental for one ASP. Extending our algorithm to the multi-ASP scenario would be more beneficial in practice.

\bibliographystyle{IEEEtran}
\bibliography{reference}

\newpage
\appendices

\section{Proof of Lemma \ref{lemma:R_exploit}}\label{proof:lemma:exploit}
\begin{proof}
Before proceeding, we first define several auxiliary variables: for a hypercube $p$ maintained by SBS $n$, we define $\bar{\mu}_n(p) = \sup_{x \in p} \mu_n(x)$ and $\ubar{\mu}_n(p) = \inf_{x\in p} \mu_n(x)$ be the best and worst expected demand of SBS $n$ over all contexts in hypercube $p$. Let 
\begin{align*}
	\bar{\bm{\mu}}(\p^t) = \{\bar{\mu}_1(p^t_1),\bar{\mu}_2(p^t_2),\dots,\bar{\mu}_N(p^t_N)\}\\
	\ubar{\bm{\mu}}(\p^t) = \{\ubar{\mu}_1(p^t_1),\ubar{\mu}_2(p^t_2),\dots,\ubar{\mu}_N(p^t_N)\}
\end{align*}
In some steps of the proofs, we need compare the service demands at different positions in a hypercube. As a point of reference, we define the context at the (geometrical) center of a hypercube $p$ as $\dot{x}(p)$. Let $\dot{\bm{\mu}}(\p^t) = \{\mu_1(\dot{x}(p^t_1)),\mu_2(\dot{x}(p^t_2)),\dots,\allowbreak\mu_N(\dot{x}(p^t_N))\}$. We also define the rental decision $\dot{\bm{f}}(\p^t)$ which is derived based on the expected service demand $\dot{\bm{\mu}}(\p^t)$ by solving the following problem in time slot $t$:
\begin{align}
	\dot{\bm{f}}(\p^t) = \argmax\nolimits_{\f\in\mathcal{F}} ~ U^t(\f; \dot{\bm{\mu}}(\p^t)) \quad \text{s.t.} ~ \eqref{cstr:sub_c1},\eqref{cstr:sub_c2}, \eqref{cstr:sub_c3}
\end{align}
The rental decision $\dot{\bm{f}}(\p^t)$ is used to identify the bad rental decisions when the hypercubes of contexts $\x^t$ is $\p^t$. Let
\begin{align}
	\mathcal{L}(\p^t) = \left\{\f\in\mathcal{F} \mid U^t(\dot{\f}(\p^t); \ubar{\bm{\mu}}(\p^t)) - U^t(\f; \bar{\bm{\mu}}(\p^t)) \geq At^{\theta}\right\}
\end{align}
be the set of suboptimal rental decisions when the SBSs' contexts belong to $\p^t$. The parameter $A>0$ and $\theta<0$ are only used in the regret analysis. We call a rental decision $\f\in\mathcal{L}(\p^t)$ \emph{suboptimal} for $\p^t$, since the ASP utility achieved by the rental decision $\dot{\f}(\p^t)$ is at least an amount $At^\theta$ higher than that achieved by the rental decision $\f\in\mathcal{L}(\p^t)$. We call the rental decisions in $\mathcal{F}\backslash\mathcal{L}(\p^t)$ \emph{near-optimal} for $\p^t$. Then, the regret of $R_\text{exploit}(T)$ can be divided into the following two summands:
\begin{align}
	\mathbb{E}[R_\text{exploit}(T)] = \mathbb{E}[R_\text{s}(T)] + \mathbb{E}[R_\text{n}(T)]
\end{align}
where the term $\mathbb{E}[R_\text{s}(T)]$ is the regret due to the suboptimal rental decision in exploitation and $\mathbb{E}[R_\text{n}(T)]$ is the regret due to the near-optimal rental decision in exploitation. In the following, we will show that each of the two summands is bounded. We first give the bound of $\mathbb{E}[R_\text{s}(T)]$ in Lemma \ref{lemma:E_Rs}. 
\begin{lemma}\label{lemma:E_Rs}
(Bound of $\mathbb{E}[R_\text{s}(T)]$.) Given the input parameters $h_T$ and $K(t)$, if the H\"{o}lder condition holds true and the additional condition $2H(t) + 2d^{\max}NLD^{\frac{\alpha}{2}}h_{T}^{\alpha} \leq At^\theta$ is satisfied with $H(t)>0$ for all $t$, then $\mathbb{E}[R_\text{s}(T)]$ is bounded by
\begin{align*}
	\mathbb{E}[R_\text{s}(T) \leq \frac{2|\mathcal{F}|B\lambda^{\max}d^{\max}}{w^{\min}}\sum_{t=1}^T\sum_{n\in\mathcal{N}}\sigma_n\left(\frac{H(t)}{d^{\max}N},K(t)\right)
\end{align*}
\end{lemma}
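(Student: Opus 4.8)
The plan is to show that whenever COERR is in an exploitation slot $t$ and picks a \emph{suboptimal} rental decision $\f^t\in\mathcal{L}(\p^t)$, at least one SBS must have had a ``bad'' demand estimation — meaning its estimate deviates from the hypercube-averaged mean $\mu_n(p^t_n)$ by more than $H(t)/(d^{\max}N)$ — and then to bound the probability of such a bad estimation using the PAC property (Assumption \ref{ass:PAC}) together with the fact that in an exploitation slot the counter satisfies $C^t_n(p^t_n)\ge K(t)$. First I would argue the contrapositive: suppose \emph{every} SBS $n$ has $|\hat\lambda_n(p^t_n)-\mu_n(p^t_n)|\le H(t)/(d^{\max}N)$. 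Then, combining the H\"older condition (Assumption \ref{assu:holder}) — which controls $|\mu_n(x)-\mu_n(\dot x(p^t_n))|\le LD^{\alpha/2}h_T^{-\alpha}$ across a hypercube of diameter $\sqrt D/h_T$ — with the boundedness $\Delta^t_n\le d^{\max}$ and $\lambda^{\max}$, one shows that $|U^t(\f;\hat{\bm\lambda}^t)-U^t(\f;\dot{\bm\mu}(\p^t))|\le H(t)+d^{\max}NLD^{\alpha/2}h_T^{-\alpha}$ for every feasible $\f$, and likewise for the upper/lower demand envelopes $\bar{\bm\mu},\ubar{\bm\mu}$. Since COERR maximizes $U^t(\cdot;\hat{\bm\lambda}^t)$ in \eqref{opt:exploit}, a chain of inequalities of the form $U^t(\f^t;\bar{\bm\mu}(\p^t)) \ge U^t(\f^t;\hat{\bm\lambda}^t)-\cdots \ge U^t(\dot\f(\p^t);\hat{\bm\lambda}^t)-\cdots \ge U^t(\dot\f(\p^t);\ubar{\bm\mu}(\p^t))-2H(t)-2d^{\max}NLD^{\alpha/2}h_T^{-\alpha}$ shows the gap is at most $2H(t)+2d^{\max}NLD^{\alpha/2}h_T^{-\alpha}\le At^\theta$ by the hypothesized side condition, so $\f^t\notin\mathcal{L}(\p^t)$ — a contradiction.

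Next I would convert this into a regret bound. In any exploitation slot where a suboptimal decision is taken, by the above there exists some $n$ with $|\hat\lambda_n(p^t_n)-\mu_n(p^t_n)|>H(t)/(d^{\max}N)$; by a union bound over the (at most $|\mathcal{F}|$, or more precisely $N$) SBSs and using $C^t_n(p^t_n)\ge K(t)$ in exploitation together with monotonicity $\partial\sigma_n/\partial C\le 0$, the probability of this event in slot $t$ is at most $\sum_{n\in\mathcal{N}}\sigma_n\!\big(H(t)/(d^{\max}N),K(t)\big)$. Each such slot contributes at most the per-slot utility loss, which — exactly as in the proof of Lemma \ref{lemma:R_explore} — is bounded by $\lambda^{\max}d^{\max}\cdot(B/w^{\min})$, the maximum number of rented SBSs being $B/w^{\min}$. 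The factor $2|\mathcal{F}|$ then arises from summing a two-sided deviation (the PAC statement is one-sided, so both $\hat\lambda_n-\mu_n>\epsilon$ and $<-\epsilon$ must be accounted for, doubling the bound) and from the counting argument comparing $\f^t$ against $\dot\f(\p^t)$ over the action space $\mathcal{F}$. Summing over $t=1,\dots,T$ yields
\[
\mathbb{E}[R_\text{s}(T)] \le \frac{2|\mathcal{F}|B\lambda^{\max}d^{\max}}{w^{\min}}\sum_{t=1}^T\sum_{n\in\mathcal{N}}\sigma_n\!\left(\frac{H(t)}{d^{\max}N},K(t)\right).
\]

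The main obstacle I anticipate is the careful bookkeeping in the first step: making the perturbation argument tight enough that the accumulated estimation error across all $N$ SBSs, plus the H\"older discretization error across the hypercube, together stay below the threshold $At^\theta$ — i.e., correctly tracking how the $N$-fold sum of per-SBS errors propagates through $U^t=\sum_n u^t_n$ and why the $\min\{\lambda^t_n,\lambda^{\max}(f^t_n)\}$ term and the $d^{\max}$ bound on $\Delta^t_n$ let us pull out the clean constants $d^{\max}N$ and $d^{\max}NLD^{\alpha/2}$. The union-bound and the per-slot regret bound are then routine, borrowed almost verbatim from Lemma \ref{lemma:R_explore}.
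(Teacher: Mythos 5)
Your proposal is correct and follows essentially the same route as the paper's proof: your contrapositive argument (``if every SBS's estimate is within $H(t)/(d^{\max}N)$ of $\mu_n(p^t_n)$, then the H\"older slack plus the optimality of $\f^t$ for $U^t(\cdot;\hat{\bm\lambda}^t)$ forces the gap below $At^\theta$'') is exactly the paper's decomposition into events $E_1$, $E_2$, $E_3$ with $\Pr\{E_3\}=0$, and your accounting of the constants --- the per-slot loss $\lambda^{\max}d^{\max}B/w^{\min}$, the factor $2$ from the two one-sided deviation events, and the factor $|\mathcal{F}|$ from summing over $\mathcal{L}(\p^t)\subseteq\mathcal{F}$ --- matches the paper's.
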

\begin{proof}
	Let $W(t) = \left\{\mathcal{U}^t = \emptyset \right\}$ be the event that the algorithm enters the exploitation phase. By the definition of $\mathcal{U}^t$, we will have $C^t_n(p^t_n)>K(t)$ for all $\forall n$. Let $V_{f}(t)$ be the event that rental decision $\f$ is taken in time slot $t$. Then, it holds that
	\begin{align}
		R_s(T) = \sum_{t=1}^T\sum_{\f\in\mathcal{L}(\p^t)} I_{\{V_{f},W(t)\}}\times \left(U^t(\f^{*,t};\bm{\lambda}^t)-U^t(\f;\bm{\lambda}^t)\right)
	\end{align}
	In each of the summands, the utility loss is considered due to taking a suboptimal decision $\f\in\mathcal{L}(\p^t)$ instead of the optimal Oracle decision $\f^{*,t}$. Since the maximum utility loss at a SBS is bounded by $\lambda^{\max}d^{\max}$, and the maximum number of SBSs that hosting the edge service is $B/w^{\min}$, we have 
	\begin{align}
		R_\text{s}(T) \leq \frac{B}{w^{\min}}\lambda^{\max}d^{\max}\sum_{t=1}^{T}\sum_{\f\in\mathcal{L}(\p^t)} I_{\left\{V_{f}(t),W(t)\right\}},
	\end{align}
	Taking the expectation, the regret due to suboptimal decisions is bounded by
	\begin{align*}
		\mathbb{E}\left[R_\text{s}(T)\right] & \leq \frac{B}{w^{\min}}\lambda^{\max}d^{\max}\sum_{t=1}^{T}\sum_{\f\in\mathcal{L}(\p^t)} \mathbb{E}\left[I_{\left\{V_{f}(t),W(t)\right\}}\right]\\
		& = \frac{B}{w^{\min}}\lambda^{\max}d^{\max}\sum_{t=1}^{T}\sum_{\f\in\mathcal{L}(\p^t)} \Pr\left\{V_{f}(t),W(t)\right\}
	\end{align*}
	Based on the algorithm design, we know that if a rental decision $\f$ is taken in exploration (i.e., the event $V_{f}(t)$), we must have $U^t(\f,\hat{\bm{\lambda}}^t) \geq U^t(\dot{\f}(\p^t),\hat{\bm{\lambda}}^t)$, Thus. we have
	\begin{align}\label{eq:Pr_VW}
		\Pr\left\{V_{f}(t),W(t)\right\} \leq \Pr\left\{U^t(\f,\hat{\bm{\lambda}}^t) \geq U^t(\dot{\f}(\p^t),\hat{\bm{\lambda}}^t),  W(t)\right\}
	\end{align}
	The right-hand side of \eqref{eq:Pr_VW} implies at least one of the three following events with any $H(t)>0$:
	\begin{align*}
		E_1 = \left\{U^t(\f,\hat{\bm{\lambda}}^t) \geq U^t(\f,\bar{\bm{\mu}}(\p^t)) + H(t), W(t)\right\},
	\end{align*} 
	\begin{align*}
		E_2 = \left\{U^t(\dot{\f}(\p^t),\hat{\bm{\lambda}}^t) \leq U^t(\dot{\f}(\p^t),\ubar{\bm{\mu}}(\p^t)) - H(t), W(t)\right\},
	\end{align*} 
	\begin{align*}
		E_3 = 	&\left\{U^t(\f,\hat{\bm{\lambda}}^t) \geq U^t(\dot{\f}(\p^t),\hat{\bm{\lambda}}^t),\right.\\
				&\left. ~~ U^t(\f,\hat{\bm{\lambda}}^t) < U^t(\f,\bar{\bm{\mu}}(\p^t)) + H(t),\right.\\
				& \left.~~ U^t(\dot{\f}(\p^t),\hat{\bm{\lambda}}^t) > U^t(\dot{\f}(\p^t),\ubar{\bm{\mu}}(\p^t) - H(t), W(t)\right\}.
	\end{align*}
	Hence, we have for the original event in \eqref{eq:Pr_VW}
	\begin{align}
		\left\{U^t(\f,\hat{\bm{\lambda}}^t) \geq U^t(\dot{\f}(\p^t),\hat{\bm{\lambda}}^t),  W(t)\right\} \subseteq \left\{ E_1 \cup E_2 \cup E_3 \right\}
	\end{align}
	The probability of these three events $E_1$, $E_2$, and $E_3$ will be bounded separately. Let us start with $E_1$. Recall that the best expected service demand for the hypercube $p$ at SBS $n$ is $\bar{\mu}_n(p) = \sup_{x \in p} \mu_n(x)$ and we must have $\mathbb{E}_{x \in p}[\mu_n(x)] = \mu_n(p) \leq \sup_{x\in p} \mu_n(x) = \bar{\mu}_n(p)$ in Assumption \ref{ass:PAC}, we will have:
	\begin{align*}
		\Pr\{E_1\} & = \Pr\left\{U^t(\f,\hat{\bm{\lambda}}^t)\geq U^t(\f,\bar{\bm{\mu}}(\p^t))+H(t), W(t)\right\}\\
		& = \Pr\left\{\sum_{n\in\mathcal{N}} \hat{\lambda}_n(p^t_n)\Delta^t_n(f^t_n) \geq \sum_{n\in\mathcal{N}} \bar{\mu}_n(p^t_n)\Delta^t_n(f^t_n)  + H(t), W(t)\right\}\\
		& \leq \Pr\left\{\sum_{n\in\mathcal{N}} \hat{\lambda}_n(p^t_n)\Delta^t_n(f^t_n) \geq \sum_{n\in\mathcal{N}} \mu_n(p^t_n)\Delta^t_n(f^t_n) + H(t), W(t)\right\}\\
		& \leq \sum_{n\in\mathcal{N}} \Pr\left\{\hat{\lambda}_n(p^t_n)\Delta^t_n(f^t_n) \geq \mu_n(p^t_n)\Delta^t_n(f^t_n) + \frac{H(t)}{N}, W(t)\right\}\\
		& \leq \sum_{n\in\mathcal{N}} \Pr\left\{ \hat{\lambda}_n(p^t_n) \geq \mu_n(p^t_n) + \frac{H(t)}{d^{\max}N}, W(t)\right\}\\
	\end{align*}
	Considering the PAC condition of the estimators $\Pr\left\{\hat{\lambda}_n(p^t_n) - \mu_n(p^t_n) \geq \epsilon\right\} = \sigma_n(\epsilon, C^t_n(p^t_n))$, we have
	\begin{align*}
		\Pr\{E_1\} & \leq \sum_{n\in\mathcal{N}} \sigma_n\left(\frac{H(t)}{d^{\max}N},C^t_n(p^t_n)\right)
	\end{align*}
	Analogously, it can be proven for event $E_2$ that
	\begin{align*}
		\Pr\{E_2\} & \leq \sum_{n\in\mathcal{N}} \sigma_n\left(\frac{H(t)}{d^{\max}N},C^t_n(p^t_n)\right)
	\end{align*}
	
	To bound event $E_3$, we first make some additional definition. First, we rewrite the service demand estimation $\hat{\lambda}_n(p^t_n)$ as below:
	\begin{align*}
		 \hat{\lambda}_n(p^t_n) = \mu_n(x^t_n) + \zeta^t_n
	\end{align*}
	where $\zeta^t_n$ denotes the deviation of estimation $\hat{\lambda}_n(p^t_n)$ from the expected service demand of SBS $n$ with context $x^t_n$. Additional, we define the best and worst context in the hypercube $p^t_n$ as $x^\text{best}(p^t_n) = \argmax_{x\in p^t_n} \mu(x)$ and $x^\text{worst}(p^t_n) = \argmin_{x\in p^t_n} \mu(x)$, respectively. Let 
	\begin{align*}
		\lambda_n^\text{best}(p^t_n) = \mu_n(x^\text{best}(p^t_n)) + \zeta^t_n\\
		\lambda_n^\text{worst}(p^t_n) = \mu_n(x^\text{worst}(p^t_n)) + \zeta^t_n\\
	\end{align*}
	By the H\"{o}lder condition, it can be shown that    
	\begin{align*}
		\lambda^\text{best}_n(p^t_n) - \hat{\lambda}(p^t_n) = \mu_n(x^\text{best}(p^t_n)) - \mu_n(x^t_n)\leq LD^{\frac{\alpha}{2}}h_{T}^{-\alpha}\\
		\hat{\lambda}(p^t_n) - \lambda_n^\text{worst}(p^t_n)= \mu_n(x^t_n) - \mu_n(x^\text{worst}(p^t_n)) \leq LD^{\frac{\alpha}{2}}h_{T}^{-\alpha}
	\end{align*}
	Applying the above results, we will have
	\begin{align}
		& U^t(\f,\bm{\lambda}^\text{best}(\p^t)) - U^t(\f,\hat{\bm{\lambda}}^t) \leq d^{\max}\sum_{n\in\mathcal{N}} LD^{\frac{\alpha}{2}}h_{T}^{-\alpha} \label{eq:temp1}\\
		& U^t(\dot{\f}(\p^t),\hat{\bm{\lambda}}^t) - U^t(\dot{\f}(\p^t),\bm{\lambda}^\text{worst}(\p^t)) \leq d^{\max}\sum_{n\in\mathcal{N}} LD^{\frac{\alpha}{2}}h_{T}^{-\alpha}  \label{eq:temp2}
	\end{align}
	where $\bm{\lambda}^\text{best}(\p^t) = \{	\lambda^\text{best}_1(p^t_1), \dots, \lambda^\text{best}_N(p^t_N) \}$ and $\bm{\lambda}^\text{worst}(\p^t) = \{\lambda^\text{worst}_1(p^t_1), \dots,\lambda^\text{worst}_N(p^t_N) \}$.
	By the definition of $\bm{\lambda}^\text{best}(\p^t)$ and $\bm{\lambda}^\text{worst}(\p^t) $, it holds for the first component of $E_3$ that
	\begin{align*}
		E_{3.1} = \left\{U^t(\f,\hat{\bm{\lambda}}^t) \geq U^t(\dot{\f}(\p^t),\hat{\bm{\lambda}}^t),\right\} \subseteq \left\{U^t(\f,\bm{\lambda}^\text{best}(\p^t)) \geq U^t(\dot{\f}(\p^t),\bm{\lambda}^\text{worst}(\p^t))\right\}
	\end{align*}
	For the second component, using \eqref{eq:temp1}, we have
	\begin{align*}
		E_{3.2} = &\left\{ U^t(\f,\hat{\bm{\lambda}}^t) < U^t(\f,\bar{\bm{\lambda}}(\p^t)) + H(t),\right\}\\
		\subseteq &\left\{ U^t(\f, \bm{\lambda}^\text{best}(\p^t)) - d^{\max}\sum_{n\in\mathcal{N}} LD^{\frac{\alpha}{2}}h_{T}^{-\alpha} < U^t(\f,\bar{\bm{\lambda}}(\p^t)) + H(t)\right\}\\
		= & \left\{ U^t(\f,\bm{\lambda}^\text{best}(\p^t))  < U^t(\f,\bar{\bm{\lambda}}(\p^t)) +  d^{\max}\sum_{n\in\mathcal{N}} LD^{\frac{\alpha}{2}}h_{T}^{-\alpha} + H(t)\right\}
	\end{align*}
	For the third component, using \eqref{eq:temp2}, we have
	\begin{align*}
		E_{3.3} = &\left\{U^t(\dot{\f}(\p^t),\hat{\bm{\lambda}}^t) > U^t(\dot{\f}(\p^t),\ubar{\bm{\lambda}}(\p^t)) - H(t)\right\}\\
		\subseteq & \left\{U^t(\dot{\f}(\p^t),\bm{\lambda}^\text{worst}(\p^t)) + d^{\max}\sum_{n\in\mathcal{N}} LD^{\frac{\alpha}{2}}h_{T}^{-\alpha} > U^t(\dot{\f}(\p^t),\ubar{\bm{\lambda}}(\p^t)) - H(t)\right\} \\
		= & \left\{U^t(\dot{\f}(\p^t),\bm{\lambda}^\text{worst}(\p^t)) > U^t(\dot{\f}(\p^t),\ubar{\bm{\lambda}}(\p^t))- d^{\max}\sum_{n\in\mathcal{N}} LD^{\frac{\alpha}{2}}h_{T}^{-\alpha} - H(t)\right\}
	\end{align*}
	We want to find a condition under which the probability of $E_3$ is zero. To this end, we design the parameter $H(t)$, such that 
	\begin{align}\label{eq:condition_H}
		2H(t) + 2d^{\max}NLD^{\frac{\alpha}{2}}h_{T}^{-\alpha} \leq At^\theta
	\end{align}
	Since $\f\in\mathcal{L}(\p^t)$, we have $U^t(\dot{\f}(\p^t); \ubar{\bm{\lambda}}(\p^t)) - U^t(\f^t; \bar{\bm{\lambda}}(\p^t)) \geq At^\theta$, which together with \eqref{eq:condition_H}, implies that:
	\begin{align*}
		U^t(\dot{\f}(\p^t); \ubar{\bm{\lambda}}(\p^t)) - U^t(\f; \bar{\bm{\lambda}}(\p^t)) - 2H(t) - 2d^{\max}NLD^{\frac{\alpha}{2}}h_{T}^{-\alpha}  \geq 0
	\end{align*}
	Rewriting yields:
	\begin{align}\label{eq:temp3}
		U^t(\dot{\f}(\p^t); \ubar{\bm{\lambda}}(\p^t)) - d^{\max}NLD^{\frac{\alpha}{2}}h_{T}^{-\alpha} - H(t)  \geq U^t(\f; \bar{\bm{\lambda}}(\p^t))  + d^{\max}NLD^{\frac{\alpha}{2}}h_{T}^{-\alpha} + H(t)
	\end{align}
	If \eqref{eq:temp3} holds true, the three components of $E_3$ cannot be satisfied at the same time: combining $E_{3.2}$ and $E_{3.3}$ with \eqref{eq:temp3} yields $U^t(\f,\bm{\lambda}^\text{best}(\p^t)) < U^t(\dot{\f}(\p^t),\bm{\lambda}^\text{worst}(\p^t))$, which contradicts the $E_{3.1}$. Therefore, under the condition \eqref{eq:condition_H}, $\Pr\{E_3\} = 0$.
	
	To sum up, under condition \eqref{eq:condition_H}, the probability $\Pr\{V_f(t),W(t)\}$ is bounded by
	\begin{align*}
		\Pr\{V_f(t),W(t)\} &\leq \Pr\{E_1\cup E_2\cup E_3\}\\
		& \leq \Pr\{E_1\} + \Pr\{E_2\} + \Pr\{E_3\} \\
		& \leq 2 \sum_{n\in\mathcal{N}}\sigma_n\left(\frac{H(t)}{d^{\max}N},C^t_n(p^t_n)\right)\\
		& \leq 2 \sum_{n\in\mathcal{N}}\sigma_n\left(\frac{H(t)}{d^{\max}N}, K(t)\right)
	\end{align*}
	where the last inequality is due to the estimator property $\frac{\partial\sigma(\epsilon,C^t_n(p))}{\partial C^t_n(p)}\leq 0$ in Assumption \ref{ass:PAC} and the fact that in the exploration phase an arbitrary counter satisfies $C^t_n(p^t_n)\geq K(t)$. 
	Therefore, the regret bound for $\mathbb{E}[R_\text{s}(T)]$ is 
	\begin{align*}
		\mathbb{E}[R_\text{s}(T)] &\leq \frac{B}{w^{\min}}\lambda^{\max}d^{\max}\sum_{t=1}^T\sum_{f\in\mathcal{L}(\p^t)}\Pr\{V_{f}(t),W(t)\}\\
		&\leq \frac{2|\mathcal{F}|B\lambda^{\max}d^{\max}}{w^{\min}}\sum_{t=1}^T\sum_{n\in\mathcal{N}}\sigma_n\left(\frac{H(t)}{d^{\max}N},K(t)\right)
	\end{align*}
\end{proof}

Next we bound the regret due to choosing near-optimal rental decisions, which is given in the Lemma below.
\begin{lemma}\label{lemma:E_Rn}
(Bound for $\mathbb{E}[R_\text{n}(T)]$). Given the input parameters $h_T$ and $K(t)$, if the H\"{o}lder condition holds true, the regret $\mathbb{E}[R_\text{n}(T)]$ is bounded by
	\begin{align*}
		\mathbb{E}[R_\text{n}(T)] \leq 3d^{\max}NLD^{\frac{\alpha}{2}}h_{T}^{-\alpha}T + AT^{\theta+1}
	\end{align*}
	
\end{lemma}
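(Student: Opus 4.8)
To bound $\mathbb{E}[R_\text{n}(T)]$ the plan is to reduce to a deterministic per-slot estimate and then sum. Recall that $R_\text{n}(T)=\sum_{t=1}^{T}\sum_{\f\in\mathcal{F}\backslash\mathcal{L}(\p^t)} I_{\{V_{f}(t),\,W(t)\}}\big(U^t(\f^{*,t};\bm{\lambda}^t)-U^t(\f;\bm{\lambda}^t)\big)$, where $W(t)=\{\mathcal{U}^t=\emptyset\}$ is the exploitation event and $\f\notin\mathcal{L}(\p^t)$ is a near-optimal decision. Conditioning on the observed contexts $\x^t$ (hence on $\p^t$ and on the deterministic Oracle decision $\f^{*,t}$) and on the selected $\f$, and using that $\bm{\lambda}^t$ is conditionally independent of the history with $\mathbb{E}[\lambda_n(x^t_n)]=\mu_n(x^t_n)$, linearity of $U^t$ in the demand gives $\mathbb{E}\big[U^t(\f^{*,t};\bm{\lambda}^t)-U^t(\f;\bm{\lambda}^t)\mid\x^t,\f\big]=U^t(\f^{*,t};\bm{\mu}^t)-U^t(\f;\bm{\mu}^t)$. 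So it suffices to bound this deterministic quantity for every near-optimal $\f$.

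For that I would exploit the H\"older condition: every hypercube of $\mathcal{P}_T$ has diameter $D^{\frac12}h_T^{-1}$, so $|\mu_n(x)-\mu_n(x')|\le LD^{\frac{\alpha}{2}}h_T^{-\alpha}$ whenever $x,x'$ lie in a common hypercube. Writing $\rho:=d^{\max}NLD^{\frac{\alpha}{2}}h_T^{-\alpha}$ and using $0\le\Delta^t_n(\cdot)\le d^{\max}$, it follows that for any feasible $\f'$ and any two of the demand profiles $\bm{\mu}^t$, $\bar{\bm{\mu}}(\p^t)$, $\dot{\bm{\mu}}(\p^t)$, $\ubar{\bm{\mu}}(\p^t)$ that agree coordinatewise up to $LD^{\frac{\alpha}{2}}h_T^{-\alpha}$, the value $U^t(\f';\cdot)$ changes by at most $\rho$. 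I also use monotonicity of $U^t(\f';\cdot)$ in the demand vector (each $\Delta^t_n\ge0$) and the defining optimality $U^t(\dot{\f}(\p^t);\dot{\bm{\mu}}(\p^t))\ge U^t(\f';\dot{\bm{\mu}}(\p^t))$ for all feasible $\f'$.

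The core is then the deterministic chain, valid on any exploitation slot $t$ for any $\f\notin\mathcal{L}(\p^t)$:
\begin{align*}
U^t(\f^{*,t};\bm{\mu}^t)
&\le U^t(\f^{*,t};\bar{\bm{\mu}}(\p^t))
\le U^t(\f^{*,t};\dot{\bm{\mu}}(\p^t))+\rho\\
&\le U^t(\dot{\f}(\p^t);\dot{\bm{\mu}}(\p^t))+\rho
\le U^t(\dot{\f}(\p^t);\ubar{\bm{\mu}}(\p^t))+2\rho\\
&< U^t(\f;\bar{\bm{\mu}}(\p^t))+2\rho+At^{\theta}
\le U^t(\f;\bm{\mu}^t)+3\rho+At^{\theta},
\end{align*}
where the first and last inequalities use monotonicity together with $\bar\mu_n(p^t_n)-\mu_n(x^t_n)\le LD^{\frac{\alpha}{2}}h_T^{-\alpha}$, the second and fourth are the H\"older ``center'' estimates above, the third is the optimality of $\dot{\f}(\p^t)$ under $\dot{\bm{\mu}}(\p^t)$, and the strict inequality is precisely the definition of $\f\notin\mathcal{L}(\p^t)$. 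Hence $U^t(\f^{*,t};\bm{\mu}^t)-U^t(\f;\bm{\mu}^t)<3\rho+At^{\theta}$ for every near-optimal $\f$. Feeding this into the reduction of the first paragraph, and using that at most one $\f$ is selected per slot so $\sum_{\f} I_{\{V_{f}(t),W(t)\}}\le 1$, gives $\mathbb{E}[R_\text{n}(T)]\le\sum_{t=1}^{T}(3\rho+At^{\theta})=3\rho T+A\sum_{t=1}^{T}t^{\theta}$. Since $t^{\theta}$ is decreasing, $\sum_{t=1}^{T}t^{\theta}\le\frac{1}{1+\theta}T^{1+\theta}$ for $-1<\theta<0$, and absorbing $\tfrac1{1+\theta}$ into the constant $A$ yields the claimed bound $\mathbb{E}[R_\text{n}(T)]\le 3d^{\max}NLD^{\frac{\alpha}{2}}h_T^{-\alpha}T+AT^{\theta+1}$.

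The main obstacle is assembling the chain correctly: one must interpose the auxiliary decision $\dot{\f}(\p^t)$ at exactly the right position so that its optimality under the center demand $\dot{\bm{\mu}}(\p^t)$ can be invoked, and then keep a careful tally showing that precisely three H\"older perturbations of the demand profile are incurred (hence the coefficient $3$), while the near-optimality slack $At^{\theta}$ enters only once. The remaining ingredients---the hypercube diameter computation, the monotonicity/linearity of $U^t$ in $\bm{\lambda}$, and the $\sum_t t^{\theta}$ estimate---are routine.
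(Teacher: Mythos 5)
Your proof is correct and follows essentially the same route as the paper: reduce to the conditional expectation $U^t(\f^{*,t};\bm{\mu}^t)-U^t(\f^t;\bm{\mu}^t)$, then chain through the center decision $\dot{\f}(\p^t)$ using its optimality under $\dot{\bm{\mu}}(\p^t)$, three H\"{o}lder perturbations of size $d^{\max}NLD^{\alpha/2}h_T^{-\alpha}$, and the defining inequality of $\mathcal{F}\backslash\mathcal{L}(\p^t)$, before summing $At^{\theta}$ over $t$. The only caveat (shared with the paper, so not a gap relative to it) is that $U^t$ is concave rather than linear in $\bm{\lambda}^t$ because of the $\min\{\lambda_n,\lambda^{\max}(f_n)\}$ term, so the passage from $\bm{\lambda}^t$ to $\bm{\mu}^t$ is really a Jensen-type bound rather than an exact identity; your handling of $\sum_t t^{\theta}$ is in fact slightly more careful than the paper's.
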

\begin{proof}
	For an arbitrary time slot $t$, let the event $W(t)$ denotes that the algorithm enters the exploration phase. Let $Q(t)$ be the event that an near optimal rental decision $\f^t \in \mathcal{F} \backslash \mathcal{L}(\p^t)$ is taken in time slot $t$. The loss due to the near-optimal subsets can be written as:
	\begin{align*}
	R_\text{n}(T) = \sum_{t=1}^{T} I_{\{W(t), Q(t)\}} \times \left(U^t(\f^{*,t};\bm{\lambda}^t)-U^t(\f^t;\bm{\lambda}^t)\right)
	\end{align*}
	Taking the expectation of $R_\text{n}(T)$, by the definition of conditional expectation, we have:
	\begin{align*}
	\mathbb{E}[R_\text{n}(T)] & = \sum_{t=1}^{T} \Pr\{W(t),Q(t)\} \cdot \mathbb{E}\left[U^t(\f^{*,t};\bm{\lambda}^t)-U^t(\f^t;\bm{\lambda}^t) \mid  W(t), Q(t) \right]\\
	& \leq  \sum_{t=1}^{T}\mathbb{E}\left[U^t(\f^{*,t};\bm{\lambda}^t)-U^t(\f^t;\bm{\lambda}^t) \mid W(t),Q(t) \right].
	\end{align*}
	Since $\f^t \in \mathcal{F} \backslash \mathcal{L}(\p^t)$, it holds that 
	\begin{align*}
		U^t(\dot{\f}(\p^t); \ubar{\bm{\lambda}}(\p^t)) - U^t(\f^t; \bar{\bm{\lambda}}(\p^t)) < At^{\theta}.
	\end{align*}
	To bound the regret, we have to give an upper bound on
	\begin{align*}
		\sum_{t=1}^{T}\mathbb{E}\left[U^t(\f^{*,t};\bm{\lambda}^t)-U^t(\f^t;\bm{\lambda}^t) \mid W(t), Q(t) \right] = \sum_{t=1}^{T} U^t(\f^{*,t};\bm{\mu}^t)-U^t(\f^t;\bm{\mu}^t).
	\end{align*}
	Applying H\"{o}lder condition several times yields
	\begin{align*}
		&U^t(\f^{*,t};\bm{\mu}^t)-U^t(\f^t;\bm{\mu}^t) \\
		\leq & 	U^t(\f^{*,t};\dot{\bm{\mu}}(\p^t)) + d^{\max}NLD^{\frac{\alpha}{2}}h_{T}^{-\alpha} - U^t(\f^t;\bm{\mu}^t) \\
		\leq & U^t(\dot{\f}(\p^t);\dot{\bm{\mu}}(\p^t)) + d^{\max}NLD^{\frac{\alpha}{2}}h_{T}^{-\alpha} - U^t(\f^t;\bm{\mu}^t) \\
		\leq & 	U^t(\dot{\f}(\p^t);\ubar{\bm{\mu}}(\p^t)) + 2d^{\max}NLD^{\frac{\alpha}{2}}h_{T}^{-\alpha} - U^t(\f^t;\bm{\mu}^t) \\
		\leq & 	U^t(\dot{\f}(\p^t);\ubar{\bm{\mu}}(\p^t)) + 3d^{\max}NLD^{\frac{\alpha}{2}}h_{T}^{-\alpha} - U^t(\f^t;\bar{\bm{\mu}}(\p^t))\\
		\leq & 3d^{\max}NLD^{\frac{\alpha}{2}}h_{T}^{-\alpha}  + At^\theta
	\end{align*}
	
	Therefore, $\mathbb{E}[R_\text{n}(T)]$ can be bounded by
	\begin{align*}
		\mathbb{E}[R_\text{n}(T)] & \leq \sum_{t=1}^{T}\left( 3d^{\max}NLD^{\frac{\alpha}{2}}h_{T}^{-\alpha}  + At^\theta\right)\\
		& \leq 3d^{\max}NLD^{\frac{\alpha}{2}}h_{T}^{-\alpha}T + AT^{\theta+1}
	\end{align*}
\end{proof}
Then the regret $\mathbb{E}[R_\text{exploit}(T)]$ is therefore bounded by 
\begin{align*}
	\mathbb{E}[R_\text{exploit}(T)] \leq \frac{2|\mathcal{F}|B\lambda^{\max}d^{\max}}{w^{\min}}\sum_{t=1}^T\sum_{n\in\mathcal{N}}\sigma_n\left(\frac{H(t)}{d^{\max}N},K(t)\right) + 3d^{\max}NLD^{\frac{\alpha}{2}}h_{T}^{-\alpha}T + AT^{\theta+1}.
\end{align*}
\end{proof}

\section{Proof of Theorem \ref{theo:regret_bound}}\label{proof:theo:regret_bound}
\begin{proof}
	Let $K(t)= t^z\log(t), 0<z<1$ and $h_T = \lceil T^\gamma \rceil, 0<\gamma,<\frac{1}{D}$, then $\mathbb{E}[R_\text{explore}(T)]$ in Lemma \ref{lemma:R_explore} can be rewrite as 
	\begin{align*}
	\mathbb{E}[R_\text{explore}(T)] & \leq \frac{NB\lambda^{\max}d^{\max}}{w^{\min}}(h_T)^D\lceil K(T)\rceil\\
	& =  \frac{NB\lambda^{\max}d^{\max}}{w^{\min}}\lceil T^\gamma \rceil^D\lceil T^z\log(T)\rceil
	\end{align*}
	Since $\lceil T^\gamma \rceil^D \leq (2T^\gamma)^D$, it holds that
	\begin{align*}
	\mathbb{E}[R_\text{explore}(T)] & \leq   \frac{NB\lambda^{\max}d^{\max}}{w^{\min}}2^DT^{\gamma D} (T^z\log(T)+1)\\
	& =  \frac{NB\lambda^{\max}d^{\max}}{w^{\min}}2^D (T^{z+\gamma D}\log(T)+T^{\gamma D})
	\end{align*}
	Consider the Lemma \ref{lemma:R_exploit}, we let $H(t)=N\lambda^{\max}d^{\max}t^{-z/2}$. Given $\sigma_n(\epsilon,C^t_n(p^t_n)) = \exp\left({-\dfrac{2C^t_n(p^t_n)\epsilon^2}{(\lambda^{\max})^2}}\right)$, the first term in \eqref{eq:R_exploit} can be written as
	\begin{align*}
	\frac{2|\mathcal{F}|B\lambda^{\max}d^{\max}}{w^{\min}}\sum_{t=1}^T\sum_{n\in\mathcal{N}}\sigma_n\left(\frac{H(t)}{d^{\max}N},K(t)\right) &= \frac{2|\mathcal{F}|B\lambda^{\max}d^{\max}}{w^{\min}}N\sum_{t=1}^{T} \exp\left(-\frac{2K(t)H^2(t)}{(\lambda^{\max}d^{\max}N)^2}\right)\\
	& = \frac{2N|\mathcal{F}|B\lambda^{\max}d^{\max}}{w^{\min}}\sum_{t=1}^{T} \exp\left(-2\log(t)\right)\\
	& =  \frac{2N|\mathcal{F}|B\lambda^{\max}d^{\max}}{w^{\min}}\sum_{t=1}^{T} t^{-2}\\
	&  \leq \frac{N|\mathcal{F}|B\lambda^{\max}d^{\max}}{w^{\min}} \cdot 2\sum_{t=1}^{\infty} t^{-2}\\
	& \leq \frac{N|\mathcal{F}|B\lambda^{\max}d^{\max}}{w^{\min}} \frac{\pi^2}{3}
	\end{align*}
	Note that in the second term of \eqref{eq:R_exploit}, $h_t^{-\alpha} = \lceil T^\gamma\rceil^{-\alpha} \leq T^{-\gamma\alpha}$ .Now the total regret is bounded by 
	\begin{align*}
	\mathbb{E}[R(T)] \leq &\frac{NB\lambda^{\max}d^{\max}}{w^{\min}}2^D (T^{z+\gamma D}\log(T) +T^{\gamma D}) \\ & + \frac{N|\mathcal{F}|B\lambda^{\max}d^{\max}\pi^2}{3w^{\min}} +  3d^{\max}NLD^{\frac{\alpha}{2}}T^{1-\gamma\alpha} + AT^{\theta+1}
	\end{align*}
	
	The summands above contribute to the regret with leading orders $O(T^{z+\gamma D}\log(T))$, $O(T^{1-\gamma \alpha})$, and $O(T^{\theta+1})$. In order to balance the leading orders, we let $z=\frac{2\alpha}{3\alpha+D}\in(0,1)$, $\gamma = \frac{z}{2\alpha}\in(0,\frac{1}{D})$, $\theta = - \frac{z}{2}$, and $A=2N\lambda^{\max}d^{\max} + 2d^{\max}NLD^{\alpha/2}$. With these parameters, the conditions in Lemma \ref{lemma:R_exploit} are satisfied. Now the regret $\mathbb{E}[R(T)]$ reduces to
	\begin{align*}
	\mathbb{E}[R(T)] \leq &\frac{NB\lambda^{\max}d^{\max}}{w^{\min}}2^D (T^{\frac{2\alpha+D}{3\alpha+D}}\log(T) +T^{\frac{D}{3\alpha+D}}) \\ & + \frac{N|\mathcal{F}|B\lambda^{\max}d^{\max}\pi^2}{3w^{\min}} +  3d^{\max}NLD^{\frac{\alpha}{2}}T^{\frac{2\alpha+D}{3\alpha+D}} + AT^{\frac{2\alpha+D}{3\alpha+D}}
	\end{align*}
	The proof is completed.
\end{proof}

\section{Proof of Theorem \ref{theo:delat_regret_bound}}\label{proof:theo:delat_regret_bound}
\begin{proof}

The proof of Theorem \ref{theo:delat_regret_bound} is similar to that of Theorem \ref{theo:regret_bound} and hence we only provide a sketch of proof for Theorem \ref{theo:delat_regret_bound}. The expect $\delta$-regret is also divided into two parts: 
\begin{align*}
	\mathbb{E}[R^\delta(T)] = \mathbb{E}[ R^\delta_\text{explore}(T) + R^\delta_\text{exploit}(T)]
\end{align*}
The approximation algorithm does not have much influences in bounding $\mathbb{E}[ R^\delta_\text{explore}(T)]$ since the worst-case utility loss $\lambda^{\max}d^{\max}$ is used to provide a upper bound of the regret incurred by a wrong rental decision at a SBS. According to the definition of $\delta$-regret, the worst-case utility loss becomes $\frac{1}{\delta}\lambda^{\max}d^{\max}$. By following the steps in proof of Lemma \ref{lemma:R_explore}, $\mathbb{E}[ R^\delta_\text{explore}(T)]$ is bounded by:
\begin{align*}
\mathbb{E}[R^\delta_\text{explore}(T)] \leq \frac{NB\lambda^{\max}d^{\max}}{\delta w^{\min}}(h_T)^D\lceil K(T)\rceil
\end{align*} 
Letting $h_T = \lceil T^{\frac{1}{3\alpha+D}}\rceil$ and $K(t)= t^{\frac{2\alpha}{3\alpha+D}}\log(t)$, we have
\begin{align*}
\mathbb{E}[R^\delta_\text{explore}(T)] \leq \frac{NB\lambda^{\max}d^{\max}}{\delta w^{\min}}2^D (T^{\frac{2\alpha+D}{3\alpha+D}}\log(T) +T^{\frac{D}{3\alpha+D}}) 
\end{align*}

To provide a upper bound of $\mathbb{E}[R^\delta_\text{exloit}(T)]$, we also need to identify suboptimal and near-optimal set of arms using:
\begin{align*}
\mathcal{L}^\delta(\p^t) = \left\{\f\in\mathcal{F} \mid U^t(\dot{\f}^\delta(\p^t); \ubar{\bm{\mu}}(\p^t)) - U^t(\f; \bar{\bm{\mu}}(\p^t)) \geq At^{\theta}\right\}
\end{align*}
where $\dot{\f}^\delta(\p^t)$ is the $\delta$-approximation solution to the following problem
\begin{align*}
\max\nolimits_{\f\in\mathcal{F}} ~ U^t(\f; \dot{\bm{\mu}}(\p^t)) \quad \text{s.t.} ~ \eqref{cstr:sub_c1},\eqref{cstr:sub_c2}, \eqref{cstr:sub_c3}
\end{align*}
and $\mathbb{E}[R^\delta_\text{exloit}(T)] = \mathbb{E}[R^\delta_\text{s}(T)] +\mathbb{E}[R^\delta_\text{n}(T)] $ is divided into two parts, which are bounded separately.

For the regret of choosing suboptimal decisions in $\mathcal{L}^\delta(\p^t)$ during exploitation, i.e., $\mathbb{E}[R^\delta_\text{s}(T)]$, we have  
\begin{align*}
\mathbb{E}[R^\delta_s(T)] & = \sum_{t=1}^T\sum_{\f\in\mathcal{L}^\delta(\p^t)} \mathbb{E}\left[I_{\{V_{f},W(t)\}}\times \left(\frac{1}{\delta}U^t(\f^{*,t};\bm{\lambda}^t)-U^t(\f;\bm{\lambda}^t)\right)\right]\\
 & \leq \frac{1}{\delta}\frac{B}{w^{\min}}\lambda^{\max}d^{\max}\sum_{t=1}^{T}\sum_{\f\in\mathcal{L}^\delta(\p^t)} \Pr\left\{V_{f}(t),W(t)\right\}
\end{align*}
Following similar steps in the proof of Lemma \ref{lemma:E_Rs}, it holds that 
\begin{align*}
\mathbb{E}[R^\delta_\text{s}(T)]\leq \frac{2|\mathcal{F}|B\lambda^{\max}d^{\max}}{\delta w^{\min}}\sum_{t=1}^T\sum_{n\in\mathcal{N}}\sigma_n\left(\frac{H(t)}{d^{\max}N},K(t)\right)
\end{align*}

For the regret of choosing near-optimal decisions in $\mathcal{F}\backslash\mathcal{L}^\delta(\p^t)$ during exploitation, i.e., $\mathbb{E}[R^\delta_\text{s}(T)]$, we have  
\begin{align*}
\mathbb{E}[R^\delta_\text{n}(T)] & = \sum_{t=1}^{T} \Pr\{W(t),Q(t)\} \cdot \mathbb{E}\left[\frac{1}{\delta}U^t(\f^{*,t};\bm{\lambda}^t)-U^t(\f^t;\bm{\lambda}^t) \mid  W(t), Q(t) \right]\\
& \leq  \sum_{t=1}^{T}\mathbb{E}\left[\frac{1}{\delta}U^t(\f^{*,t};\bm{\lambda}^t)-U^t(\f^t;\bm{\lambda}^t) \mid W(t),Q(t) \right].
\end{align*}
Since $\f^t \in \mathcal{F} \backslash \mathcal{L}^\delta(\p^t)$, it holds that 
\begin{align*}
U^t(\dot{\f}^\delta(\p^t); \ubar{\bm{\lambda}}(\p^t)) - U^t(\f^t; \bar{\bm{\lambda}}(\p^t)) < At^{\theta}.
\end{align*}
To bound the regret, we have to give an upper bound on
\begin{align*}
\sum_{t=1}^{T}\mathbb{E}\left[\frac{1}{\delta}U^t(\f^{*,t};\bm{\lambda}^t)-U^t(\f^t;\bm{\lambda}^t) \mid W(t), Q(t) \right] = \sum_{t=1}^{T} \frac{1}{\delta}U^t(\f^{*,t};\bm{\mu}^t)-U^t(\f^t;\bm{\mu}^t).
\end{align*}
Applying H\"{o}lder condition several times yields
\begin{align*}
\frac{1}{\delta}U^t(\f^{*,t};\bm{\mu}^t)-U^t(\f^t;\bm{\mu}^t) \leq & 	\frac{1}{\delta}U^t(\f^{*,t};\dot{\bm{\mu}}(\p^t)) + d^{\max}NLD^{\frac{\alpha}{2}}h_{T}^{-\alpha} - U^t(\f^t;\bm{\mu}^t) \\
\leq & \frac{1}{\delta} U^t(\dot{\f}(\p^t);\dot{\bm{\mu}}(\p^t)) + d^{\max}NLD^{\frac{\alpha}{2}}h_{T}^{-\alpha} - U^t(\f^t;\bm{\mu}^t) \\
\leq &  U^t(\dot{\f}^\delta(\p^t);\dot{\bm{\mu}}(\p^t)) + d^{\max}NLD^{\frac{\alpha}{2}}h_{T}^{-\alpha} - U^t(\f^t;\bm{\mu}^t) \\
\leq & 	U^t(\dot{\f}^\delta(\p^t);\ubar{\bm{\mu}}(\p^t)) + 2d^{\max}NLD^{\frac{\alpha}{2}}h_{T}^{-\alpha} - U^t(\f^t;\bm{\mu}^t) \\
\leq & 	U^t(\dot{\f}^\delta(\p^t);\ubar{\bm{\mu}}(\p^t)) + 3d^{\max}NLD^{\frac{\alpha}{2}}h_{T}^{-\alpha} - U^t(\f^t;\bar{\bm{\mu}}(\p^t))\\
\leq & 3d^{\max}NLD^{\frac{\alpha}{2}}h_{T}^{-\alpha}  + At^\theta
\end{align*}
and therefore
\begin{align*}
\mathbb{E}[R^\delta_\text{exploit}(T)] \leq \frac{2|\mathcal{F}|B\lambda^{\max}d^{\max}}{\delta w^{\min}}\sum_{t=1}^T\sum_{n\in\mathcal{N}}\sigma_n\left(\frac{H(t)}{d^{\max}N},K(t)\right) + 3d^{\max}NLD^{\frac{\alpha}{2}}h_{T}^{-\alpha}T + AT^{\theta+1}.
\end{align*}
Letting $h_T = \lceil T^{\frac{1}{3\alpha+D}}\rceil$ and $K(t)= t^{\frac{2\alpha}{3\alpha+D}}\log(t)$, and assuming that MLE is applied for service demand estimation, we have
\begin{align*}
\mathbb{E}[R^\delta_\text{exploit}(T)] \leq  \frac{N|\mathcal{F}|B\lambda^{\max}d^{\max}\pi^2}{3\delta w^{\min}} +  3d^{\max}NLD^{\frac{\alpha}{2}}T^{\frac{2\alpha+D}{3\alpha+D}} + AT^{\frac{2\alpha+D}{3\alpha+D}}
\end{align*}
The proof is completed by comparing the leading orders of the upper bounds of $\mathbb{E}[R^\delta_\text{explore}(T)]$ and $\mathbb{E}[R^\delta_\text{exploit}(T)]$.
\end{proof}

\end{document}